\newtheorem{theorem}{Theorem} 
\newtheorem{conjecture}[theorem]{Conjecture}
\newtheorem{lemma}[theorem]{Lemma}
\providecommand{\U}[1]{\protect\rule{.1in}{.1in}}
\title{The Riccati Tontine: \\ How to Satisfy Regulators on Average}
\author{Moshe A. Milevsky and Thomas S. Salisbury}
\thanks{Milevsky is a Professor of Finance at the Schulich School of Business, York University, and Executive Director of the IFID Centre. Salisbury is a Professor in the Department of Mathematics and Statistics at York University. The authors acknowledge funding from the IFID Centre (Milevsky) and from NSERC (Salisbury) as well as helpful comments from seminar participants at the University of Amsterdam and the University of Leuven. Milevsky discloses that he is a consultant to Guardian Capital (Canada) and has a financial interest in matters discussed in this paper.}
\date{\today}
\begin{document}

\begin{abstract}


This paper presents a new type of modern accumulation-based tontine, called the Riccati tontine, named after two Italians: mathematician Jacobo Riccati (b. 1676, d. 1754) and financier Lorenzo di Tonti (b. 1602, d. 1684). The Riccati tontine is yet another way of pooling and sharing longevity risk, but is different from competing designs in two key ways. The first is that in the Riccati tontine, the representative investor is expected -- although not guaranteed -- to receive their money back if they die, or when the tontine lapses. The second is that the underlying funds within the tontine are deliberately {\em not} indexed to the stock market. Instead, the risky assets or underlying investments are selected so that return shocks are negatively correlated with stochastic mortality, which will maximize the expected payout to survivors. This means that during a pandemic, for example, the Riccati tontine fund's performance will be impaired relative to the market index, but will not be expected to lose money for participants. In addition to describing and explaining the rationale for this non-traditional asset allocation, the paper provides a mathematical proof that the recovery schedule that generates this financial outcome satisfies a first-order ODE that is quadratic in the unknown function, which (yes) is known as a Riccati equation.

\end{abstract}

\maketitle

\vspace{0.5in}

\begin{flushleft}
{\bf JEL:}  G12 (Asset Pricing), G22 (Actuarial Studies), G52 (Insurance) \\
{\bf Keywords:} 	Pensions, Annuities, Retirement Planning
\end{flushleft}

\newpage

\section{Introduction and Motivation}
\label{sec:intro}

The academic research on pooled schemes that share longevity risks without providing any guarantees has seen significant growth in recent years. In particular, this issue in which this article appears is evidence of the ongoing interest in this field. This collection of papers, articles, and research strands that focus on pooling longevity risk has coalesced around the phrase: ``tontines'' although early literature in the 21st century used various actuarial terms such as ``participating life annuities'' or ``longevity-linked annuities'', or ``pooled annuities'', and ``self-annuitization schemes''. While this article does not aim to provide an extensive review of the growing field of tontine literature, the scholarly work on tontines can be classified into the following categories:

\begin{itemize}

\item Historical and/or economic analysis of 18th and 19th-century tontines, such as Jennings and Trout (1976), Ransom and Sutch (1987), Jennings, et al. (1988), Weir (1989), McKeever (2009), Rothschild (2009), Milevsky (2015), Gallais-Hamonno and Rietsch (2018), Hellwege (2018), as well as Li \& Rothschild (2020), and more recently McDiarmid (2024).

\item Simple one-period models of a basic tontine, such as the original work by Sabin (2010), Denuit and Vernic (2018), Fullmer and Sabin (2019), as well the recent work by Dhaene and Milevsky (2024), which provide basic intuition of the structure that is easily understood and accessible by non-actuaries.

\item Policy-focused articles that introduce and/or advocate for allowing (modern) tontines within 21st-century retirement plans, such as the work by Rotemberg (2009), Newfield (2014), Forman and Sabin (2015), Fullmer (2019), Winter and Planchet (2021), and especially Iwry et al. (2020).

\item Actuarial analysis of group self-annuitization schemes (GSA), longevity-indexed annuities (LIA), participating longevity-linked annuities (PLLA), all forms of a tontine, as well as articles on the impact of pool size on the stability of payouts. This would include the original work by Piggot et al. (2005), Stamos (2008), Qiao and Sherris (2013), Donnelly (2015), Bräutigam et al. (2017), and Pflaumer (2022).

\item Single cohort natural tontine schemes, introduced by Milevsky and Salisbury (2015), and subsequently expanded in Milevsky and Salisbury (2016).

\item Optimal lifecycle analysis of the allocation to (retirement) decumulation tontines and/or participating annuities, such as the work by: Maurer et al. (2013), Brautigam et al. (2017), Boon et al. (2020), and Chen, et al. (2020).

\item Tontine variants and generalizations, including elements of life insurance and long-term care that can satisfy bequest motives and/or hedge associated risks, such as the work by: Bernhardt and Donnelly (2019),  Chen et al. (2019, 2022), Dagpunar (2021), Hieber and Lucas (2022), Forsyth et al. (2024).

\end{itemize}

The unifying theme and underlying DNA of tontine schemes -- regardless of what they are called -- is that absolutely nothing is guaranteed. Likewise, these longevity risk-sharing arrangements live (or at least attempt to) outside the insurance regulatory environment and their stringent capital requirements. But alas, one can never escape regulation altogether and the recent promotors defaulted into the legal universe of securities. Moreover, the global trend has been for these regulators to impose a new type of constraint, namely that investors should receive their money back on average if forced to exit or sell early. In fact, anecdotal as well as behavioural evidence suggests that such a design – with a recovery, bequest or legacy value -- is more appealing to consumers as well. Accepting this condition dilutes a substantial portion of the mortality credits from pooling; however, rejecting it would mean the absence of regulatory approval.

\vspace{0.1in}

So, motivated by this engineering problem, our paper proves that under fairly general conditions, the optimal time-dependent recovery schedule -- which is one minus the surrender charge schedule -- imposed by such a requirement is unique (or nearly so) and satisfies a first-order ODE that is quadratic in the recovery function. We label this a Riccati Tontine and provide numerical values for the required recovery schedule using real-world parameters. Just to clarify, let's say the recovery schedule at time $t=10$ is $70\%$. In this case, if an investor dies at that time, they will only be able to pass on $70\%$ of the underlying account value to their heirs, and the remaining $30\%$ will be lost to surrender charges that are then shared as mortality credits among the remaining pool of investors. It's important to note that there is no guarantee that the $70\%$ of the underlying account value will be more than the original investment amount made by the participant at time $t=0$. This is a tontine after all, and designing based on averages is not making a guarantee. 

\vspace{0.1in}

In addition to the intrinsic mathematical elegance, our work guides how to design tontines that keep regulators happy (on average), but without wasting valuable mortality credits. The inspiration or intellectual motivation for this paper is the recent (re)introduction of tontines in Canada and, in particular, the GuardPath modern tontine product.\footnote{See: https://www.guardiancapital.com/investmentsolutions/guardpath-modern-tontine-trust/} That product includes a surrender charge schedule that {\em increases} over time, or a redemption schedule (percentage) that {\em declines} over time, similar to the curves we discuss and derive in this paper. Although to be absolutely crystal clear, there are many real-world features in the GuardPath product that differ from the idealized product we describe in this pages, and would should only view real-world incarnation of the Riccati tontine as motivation.

\subsection{Outline}

The remainder of this article is organized as follows. Section \ref{sectiontwo} introduces the basic stochastic model underlying the Riccati tontine, then Section \ref{sectionthree} introduces some variations in design, with both sections assuming a deterministic hazard rate. Section \ref{sectionfour} moves-on to introduce stochastic hazard rates and the proper selection of correlations and Section \ref{sectionfive} concludes the paper. 

\section{The Riccati tontine}
\label{sectiontwo}
\subsection{Preliminary model and the regulatory constraint}
\label{preliminarymodel}

We assume GBM dynamics for the underlying asset or mutual fund that represent the pure investment, which, as we will argue later in Section \ref{sectionfour}, need not be a broad-based index fund.
$$
dS_t=\mu S_t\,dt+\sigma S_t\,dB_t.
$$
Its dynamics are set and will not adjust to the mortality performance of the pool. One can think of the longevity risk sharing as an {\em overlay}, or simply a legal mechanism for sharing ownership units and rights of the underlying fund. The tontine manager or administrator will invest its funds fully in this single asset class which will then grow (if the drift is positive) and accumulate overtime. Recall that we are not describing decumulation scheme or one in which investors and funds can be added over time. Think of a closed pool in which $n$ investors allocate a fixed sum at time $t=0$, then wait for $T$ years hoping to survive, and then return at time $t=T$ to share the investment proceeds.

\vspace{0.1in}

Write $L_t$ for the total value of assets under management at time $t$, and $N_t$ for the number of survivors at time $t$. And without any loss of generality we assume that these $n$ investors all invest (only) \$1 initially, so $Z_t=\frac{L_t}{N_t}$ represents each of their individual account values, $L_0=N_0=n$ and $Z_0=1$. All of $N_t$, $L_t$, and $Z_t$ are right-continuous. We assume, for now, that members of the pool have a common deterministic and known hazard rate $\lambda_t$. We will modify this later, in Section \ref{sectionfour}, when we consider mortality-correlated assets.

\vspace{0.1in}

The pool or fund will remain in operation from time $t=0$ to a time horizon $T$ at which point the survivors will each be paid their share $Z_T$ of the fund. Early withdrawals (a.k.a. lapsation in the language of insurance) or deaths will be penalized by a surrender charge schedule.  In most cases, this will be $(1-k_t)$, where $0\le k_t\le 1$; in other words, someone lapsing or dying at $t<T$ is only paid $k_tZ_{t-}$. Of course, in the case of death, it is their estate or beneficiary that is paid this amount. The exception to the general rule is that we will find it convenient to vary this if only a single survivor remains. So the actual constraint will be that they are paid $K_tZ_{t-}$, where $K_t=k_t$ if $N_{t-}>1$, and $K_t=\kappa_t$ if $N_{t-}=1$. 

\vspace{0.1in}

The intuition here is that if (only) one person remains in the pool, it seems pointless to continue the tontine. Indeed, should that unlikely scenario occur the crowned winner of the tontine would be left alone in an empty risk pool, not accruing any more mortality credits, waiting for the terminal horizon to collect his or her winnings. In some sense if that happens the tontine element dies with the penultimate death. Arguably, the tontine fund should be dismantled as soon as the n'th minus one person dies or lapses. In fact, if the final (lucky) survivor happened to die before the terminal horizon, the fund would be left with a sum of money – more precisely, one minus $k$ times the pre-death fund value -- with no owners or investors to claim the assets. And, while, again, the probability of this scenario being realized is vanishingly small as the size of the pool goes to infinity, a properly specified longevity risk-sharing rule must account for all possibilities. Yes, bequeathing the leftovers to the fund managers -- or to an administrator, as suggested in a recent paper by Dhaene and Milevsky (2024) -- is certainly an option; it defeats the communal risk-sharing spirit of the tontine in addition to creating obvious moral hazard problems. Therefore, for the sake of both completeness and mathematical elegance, we will build and model our tontine fund to allow the last remaining survivor to receive the entire market value of the fund, dead or alive -- without incurring a recovery penalty. In some sense, this is a double win. They have survived everyone else in the original pool and they get to leave without incurring the financial bite of a recovery schedule.  

\vspace{0.1in}

The main purpose of the tontine is to provide a retirement nest egg to those who survive until a specific time or age. The surrender charge schedule ensures that mortality credits accrue to those who survive until this time, to the extent permitted by security regulators. As argued above, once the pool has only one survivor, no further mortality credits can accrue, and we will allow enhanced or full withdrawals at this point. In full generality, we do so via $\kappa_t$. We will require that  $0\le k_t\le\kappa_t\le 1$, with $\kappa_t=1$ being the case of full withdrawals by a lone survivor. Our Riccati tontine design focuses on finding appropriate choices of $k_t$ and $\kappa_t$.

\vspace{0.1in}

The presence of the surrender charge schedule means that it is sub-optimal to lapse, so we will, in fact, assume in our model that, unless we explicitly say otherwise, early recovery only occur because of death. In practice, there will still be a small background level of lapsation caused by liquidity demands, but we incorporate those into our numerical experiments by including a Makeham term in the hazard rate to account for lapsation over and above the Gompertz term that represents biological mortality (see Section \ref{Gompertzreview}). In any case, our model will, therefore, have at most one recovery at any time $t<T$. In other words, $\Delta N_t$ can only be $0$ or $-1$, and if $\Delta N_t=-1$ then:

$$
L_t=L_{t-}-K_tZ_{t-}=L_{t-}\Big(1-\frac{K_t}{N_{t-}}\Big)
$$
and 
$$
Z_t=\frac{L_{t-}-K_tZ_{t-}}{N_t}=Z_{t-}\Big(\frac{N_{t-}-K_t}{N_t}\Big)=Z_{t-}\Big(1+\frac{1-K_t}{N_t}\Big).
$$
In between jumps, of course, $Z_t$ is continuous and
$$
dZ_t=\mu Z_t\,dt+\sigma Z_t\,dB_t.
$$

The regulatory constraint is that individuals who die or lapse at any time must, on average, receive back their initial investment. There are several ways of making this precise, and we will adopt the following formulation (and will briefly discuss alternatives in Section \ref{sectionthree}). Split the pool into two imaginary groups, the first consists of one specific individual, who we call the representative agent, and the other being the rest of the pool. Let $\tau$ be the time the representative agent dies. Then the \emph{regulatory constraint} is taken to be the following:
\begin{equation}
E[K_tZ_{t-}\mid \tau=t]\ge 1\quad\text{for every $0\le t<T$}.
\label{basicconstraint}
\end{equation}

The objective of the tontine manager or administrator is to maximize -- or come close to maximizing -- the expected terminal payoff to survivors at time $T$, which can then be used to fund consumption over their remaining lifetime. In terms of our representative agent, this gives us an objective function $E[Z_T\mid\tau\ge T]$. We will explore this using the following notion:
A choice of $k_t$ and $\kappa_t$ satisfying the regulatory constraint is said to be \emph{extremal} if in fact
$$
\text{For every $0\le t<T$, either $k_t=0$ or }E[K_tZ_{t-}\mid \tau=t]= 1.
$$

\subsection{Conditioned model}
We have expressed our regulatory constraint and objective function by conditioning the original model. Now, we will restate these expressions using the conditioned processes as the models instead. This is the perspective mainly adopted in the rest of the paper. We used several conditionings, which resulted in several versions of the conditioned processes. However, we will switch between them as needed, relying on the context to clarify which version is being used.

\vspace{0.1in}

For the regulatory constraint at a given time $t$, we take the reference individual to have deterministic lifetime $t$, while the other $n-1$ individuals still die at rate $\lambda_s$, $0\le s\le t$. Then \eqref{basicconstraint} simply becomes that $E[K_tZ_{t-}]\ge 1$. It will be convenient to define $z_s=E[Z_{s-}]$ for $s\le t$, noting that these quantities are consistent, that is they do not actually depend on $t$. For the objective function, we take the representative agent to survive until time $T$, while the other $n-1$ individuals again die at rate $\lambda_t$, $0\le t\le T$. This makes our objective function simply $z_T$. We will also consider the limiting case $n=\infty$. Of course, in this case $\kappa_t$ is irrelevant, and $K_t=k_t$ for every $t$. Now $Z_t$ is continuous, and its dynamics are that 
\begin{equation}
dZ_t=\mu Z_t\,dt+\sigma Z_t\,dB_t+(1-k_t)\lambda_t Z_t\,dt
\label{limitingdynamics}
\end{equation}
(we refer interested readers to the Appendix for a derivation). 

\subsection{Main results}
Our primary focus will be the following tontines. Given a fixed hazard rate $\lambda_t$ for everyone in the tontine pool, we label a \emph{Riccati tontine} a tontine as above, with $k_t$ satisfying the Riccati equation
$$
k'_t+(\mu+\lambda_t)k_t-\lambda_tk_t^2,\quad k_0=1.
$$
See Figure \ref{FIG1} and Table \ref{TABLE1}. (All tables and figures are placed at the end of the paper.) Note that $\sigma$ does not appear in the above differential equation. Note also that for finite $n$ there can be multiple Riccati tontines, differing only in their choice for $\kappa_t$.
\begin{theorem}
\label{theorem1}
\begin{enumerate}
\item For $n=\infty$, the Riccati tontine satisfies the regulatory constraint and is extremal. 
\item For $n$ finite and $\kappa_t=1$, the Riccati tontine satisfies the regulatory constraint.
\item Moreover, for each $n$ finite, there is a choice of $\kappa_t\in [k_t,1]$ for which the Riccati tontine is extremal
\end{enumerate}
\end{theorem}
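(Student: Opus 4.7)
The plan is to introduce the payoff process $W_t = K_t Z_t$ (so that $W_t = k_t Z_t$ on $\{N_t > 1\}$ and $W_t = \kappa_t Z_t$ on $\{N_t = 1\}$), whose mean is exactly the left-hand side of the regulatory constraint \eqref{basicconstraint} at time $t$. The approach is to derive an ODE for $E[W_t]$ by combining Ito's formula on the continuous pieces with the jump contributions coming from the deaths of the other $n-1$ pool members. The Riccati equation will be arranged so that a key term in this ODE vanishes, leaving an expression whose sign can be read off directly.

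For part (a), with $n = \infty$ the process $Z_t$ is continuous and satisfies \eqref{limitingdynamics}. The product rule applied to $W_t = k_t Z_t$ yields a drift of $Z_t\bigl[k_t' + \mu k_t + (1-k_t)\lambda_t k_t\bigr]$, which is $Z_t$ times the left-hand side of the Riccati equation and hence vanishes. Thus $W$ is a martingale, $E[W_t] = W_0 = 1$, and the constraint is tight. Because $k_0 = 1$ and $k = 0$ is a stationary point of the Riccati ODE (and $k' < 0$ at any interior $k \in (0,1)$ when $\mu \ge 0$), $k_t > 0$ for every $t$, so extremality holds via the equality alternative.

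For finite $n$ I would work with $\alpha_t^{(j)} = E[Z_{t-}\mathbf{1}_{N_{t-} = j}]$, $\zeta_t = \alpha_t^{(1)}$, and $\eta_t = \sum_{j\ge 2} \alpha_t^{(j)}$, so that $E[W_t] = \kappa_t \zeta_t + k_t \eta_t$. The continuous drift of $W_t$ contributes $(\kappa_t' + \mu\kappa_t)\zeta_t + (k_t' + \mu k_t)\eta_t$, while each non-representative death (intensity $(N_{t-}-1)\lambda_t$) multiplies $Z$ by $(N_{t-}-k_t)/(N_{t-}-1)$; the resulting jump contributions split into a generic part (transitions $N: j \to j-1$ with $j \ge 3$) and a boundary part ($j = 2$, where $K$ switches from $k_t$ to $\kappa_t$). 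Invoking the Riccati equation to kill the $\eta_t$-coefficient, the bookkeeping collapses to
\[
\frac{d}{dt}E[W_t] = (\kappa_t' + \mu\kappa_t)\zeta_t + \lambda_t (2 - k_t)(\kappa_t - k_t)\alpha_t^{(2)}.
\]
Setting $\kappa_t \equiv 1$ gives $\mu \zeta_t + \lambda_t(1-k_t)(2-k_t)\alpha_t^{(2)} \ge 0$, which together with $E[W_0] = 1$ yields part (b).

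For part (c), I would define $\kappa_t$ implicitly by $\kappa_t\zeta_t + k_t\eta_t = 1$, i.e.\ $\kappa_t = (1 - k_t\eta_t)/\zeta_t$ for $t > 0$ (at $t=0$ the constraint is automatic); extremality is then built in. The upper bound $\kappa_t \le 1$ is precisely the inequality just proved in (b), while the lower bound $\kappa_t \ge k_t$ is equivalent to $k_t z_t \le 1$. The latter follows by running the same ODE identity with $\kappa_t = k_t$: it reduces to $\frac{d}{dt}E[k_tZ_{t-}] = -\lambda_t k_t(1-k_t)\zeta_t \le 0$, so $k_t z_t$ is non-increasing from its initial value $1$. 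The main technical obstacle is the careful tracking of the jump contributions, and in particular isolating the $j = 2$ boundary transition where the payout multiplier changes from $k_t$ to $\kappa_t$, so that the Riccati cancellation in the $\eta_t$-coefficient can be invoked cleanly.
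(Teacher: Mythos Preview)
Your proposal is correct and follows essentially the same route as the paper. The only difference is organizational: the paper introduces $u_j(t)=E[L_t\mathbf 1_{\{N_t=j\}}]=j\,\alpha_t^{(j)}$ and derives a system of ODEs for them (its Lemma~\ref{odeforu}) before differentiating $E[K_tZ_{t-}]=k_tz_t+(\kappa_t-k_t)u_1(t)$, whereas you compute the drift of $W_t=K_tZ_t$ directly via It\^o plus jump compensator; both arrive at your displayed identity (the paper writes it as $\mu k_tu_1+(\kappa_t-k_t)\bigl[\mu u_1+\lambda_t(1-\tfrac{k_t}{2})u_2\bigr]+\kappa'_tu_1$), and the Riccati cancellation, the isolation of the $j=2$ boundary transition, and the two monotonicity arguments used for (b) and (c) are identical.
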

\begin{proof}[Proof of (a)] Solving \eqref{limitingdynamics} we have that 
\begin{equation}
\label{Zforinfiniten}
Z_t=e^{(\mu-\frac12 \sigma^2)t+\sigma B_t+\int_0^t (1-k_s)\lambda_s\,ds}
\end{equation}
so 
$$
E[k_tZ_t]=k_te^{\mu t+\int_0^t (1-k_s)\lambda_s\,ds}.
$$
We want this to $=1$, and differentiating, we see that the Riccati tontine fulfills this. 
\end{proof}
Note that when $n=\infty$, the (post-surrender charge) payout to a client dying at $t$ is therefore lognormal, namely $e^{\sigma B_t-\frac{\sigma^2}{2}t}$. We defer the proofs of (b) and (c) to the Appendix. Among the advantages of the Riccati tontine is that $k_t$ does not depend on $n$, so one does not need to know the number of purchasers before fixing the recovery schedule. This also makes $k_t$ exceptionally easy to compute -- other tontines that arise in this paper will typically require much longer compute time. Yet the Riccati tontine always satisfies the regulatory constraint and will turn out to be nearly optimal. The importance of extremal solutions is the following, which is proved in the Appendix.

\begin{theorem}
\label{maximality}
Assume that $k_t$ and $\kappa_t$ are continuous and yield an extremal tontine. Then $k_t$ maximizes the expected terminal payoff $z_T$ among tontines with this choice of $\kappa_t$. 
\end{theorem}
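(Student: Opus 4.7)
The plan is to combine a conservation-of-expected-value identity with a pathwise monotone-coupling argument. First, since the pool's assets are fully invested in $S$, the discounted total fund value $L_t e^{-\mu t}$ is a local martingale punctured by jumps of size $-K_t Z_{t-} e^{-\mu t}$ at each death. Taking expectations (under sufficient integrability) and exploiting symmetry across the $n$ agents to divide by $n$ yields the per-agent ``budget'' identity
\[
e^{-\mu T}p_T z_T + \int_0^T w(t)\,g(t)\,dt = 1,
\]
where $p_t=e^{-\int_0^t\lambda_s\,ds}$, $w(t)=\lambda_t p_t e^{-\mu t}>0$, and $g(t)=E[K_tZ_{t-}\mid \tau=t]$. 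Since $w>0$, maximizing $z_T$ over $k$ (with $\kappa$ held fixed) subject to $g\ge 1$ is equivalent to minimizing $\int_0^T wg\,dt$, and $g\ge 1$ gives the ``floor'' $\int wg\ge \int w$, with equality iff $g\equiv 1$.

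Next I would use pathwise monotonicity via coupling. Realize $Z^{k}$ and $Z^{\tilde k}$ on a common probability space driven by the same Brownian motion and the same death times of the other $n-1$ agents; the two processes follow identical GBM between jumps, while at each death the jump factor $1+(1-k_t)/N_t$ is strictly larger for whichever recovery schedule is smaller. Hence $k\le\tilde k$ pointwise on $[0,T]$ implies $Z^{k}\ge Z^{\tilde k}$ pathwise, so $z_T(k)\ge z_T(\tilde k)$ upon taking expectations under the ``representative survives to $T$'' conditioning.

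The crux is then to prove that the extremal $k^*$ is pointwise no larger than any regulatory competitor $\tilde k$ with the same $\kappa$. Setting $a_t(k)=E[Z_{t-}\mathbf{1}_{N_{t-}>1}]$ and $b_t(k)=E[Z_{t-}\mathbf{1}_{N_{t-}=1}]$ (each depending only on $k$ restricted to $[0,t)$), the extremal equation reads $k^*_t=\max(0,(1-\kappa_t b^*_t)/a^*_t)$, whereas regulatory only demands $\tilde k_t\ge\max(0,(1-\kappa_t \tilde b_t)/\tilde a_t)$. The base case at $t=0$ is immediate: $a_0=1$, $b_0=0$ (for $n\ge 2$) force $k_0=\tilde k_0=1$. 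For the forward step, if $\tilde k\ge k^*$ on $[0,t)$, the coupling yields $\tilde a_t\le a^*_t$ and $\tilde b_t\le b^*_t$; since $(a,b)\mapsto(1-\kappa_t b)/a$ is strictly decreasing in each argument on the region $\kappa_t b<1$, the regulatory lower bound for $\tilde k_t$ is at least $k^*_t$ (with the case $\kappa_t b^*_t\ge 1$ giving $k^*_t=0\le\tilde k_t$ trivially). Combining pointwise $\tilde k\ge k^*$ with the coupling then delivers $z_T(\tilde k)\le z_T(k^*)$.

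The main obstacle is converting this forward-propagation heuristic into a rigorous statement in continuous time, since $a_t$ and $b_t$ depend nonlocally on the history of $k$ and the forward operator $F(t,a,b)=\max(0,(1-\kappa_t b)/a)$ is not monotone in the $k$-history in the ``correct'' direction pointwise. The natural strategy is to define $\sigma=\inf\{t:\tilde k_t<k^*_t\}$, use the continuity hypothesis on $k^*$, $\tilde k$, and $\kappa$ to force $\tilde k_\sigma=k^*_\sigma$ at the putative first crossing, and then propagate the inequality across $\sigma$ by a short-time Gronwall-style estimate exploiting the sign of $\partial_a F$ and $\partial_b F$, together with continuous dependence of $a_t,b_t$ on the $k$-history. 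The continuity assumption in the theorem's hypothesis is exactly what makes this local analysis close the argument.
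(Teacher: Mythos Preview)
Your coupling observation (that $k\le\tilde k$ pointwise implies $Z^k\ge Z^{\tilde k}$ pathwise) is correct, and the forward step---if $\tilde k\ge k^*$ on $[0,t)$ then the regulatory constraint forces $\tilde k_t\ge k^*_t$---is also valid. But the bootstrapping gap you flag is genuine and your proposed fix does not close it. First, the theorem assumes continuity only of the extremal $k^*$ and of $\kappa$; competitors $\tilde k$ need not be continuous, so you cannot force $\tilde k_\sigma=k^*_\sigma$ at a putative first crossing. Second, even granting continuity of $\tilde k$, the Gronwall sketch is circular: at $\sigma$ the differences $u^*_j(\sigma)-\tilde u_j(\sigma)$ can all vanish (this necessarily happens at $\sigma=0$), and then for $t$ just past $\sigma$ the sign of $a^*_t-\tilde a_t$ depends on $\tilde k-k^*$ on $(\sigma,t)$ itself, which is exactly what you are trying to control. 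Your budget identity is also incomplete: when $\kappa_t<1$ there is a $k$-dependent leakage term $n^{-1}e^{-\mu T}E[L_T 1_{\{N_T=0\}}]$ on the left side, and extremal tontines need not have $g\equiv 1$ (they may have $k_t=0$ with $g(t)=\kappa_t b_t>1$), so the identity alone cannot pin down the optimum.

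The paper avoids the bootstrapping problem by an iterative-improvement argument rather than a first-crossing one. Given any regulatory $\ell$, set $\bar\ell_t=\max\big(0,(1-\kappa_t u_1(t))/\sum_{j\ge 2}u_j(t)/j\big)$ using the $u_j$ solved with $\ell$; then $\bar\ell\le\ell$ pointwise, and after re-solving the cooperative ODE system of Lemma~\ref{odeforu} with $\bar\ell$, the comparison theorem gives $\bar u_j\ge u_j$, so $\bar\ell$ is again regulatory and has $\bar z_T\ge z_T$. Iterating produces a decreasing sequence whose limit is extremal; finally, the extremal condition combined with Lemma~\ref{odeforu} forms a closed Lipschitz ODE system with a unique solution from $k_0=1$, so the limit must be $k^*$. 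This yields both $\tilde k\ge k^*$ pointwise (the inequality you were after) and $z_T(\tilde k)\le z_T(k^*)$, with no continuous-time crossing analysis required.
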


Our Riccati equation is also a 2nd order Bernoulli equation. A standard change of variables to $y_t=\frac{1}{k_t}$ leads to $y'_t=(\mu+\lambda_t)y_t-\lambda_t$, and then to the following explicit formula. 
\begin{equation}
\label{Bernoullisolution}
k_t=\frac{1}{y_t}=\frac{1}{1+\frac{\mu e^{\mu t}}{p_t}\int_0^t p_se^{-\mu s}\,ds},
\end{equation}
where $p_t$ is the survival probability $e^{-\int_0^t \lambda_s\,ds}$. More details are given in the Appendix, but note that this immediately implies that $0<k_t\le 1$ and that $k_t$ is decreasing to 0.

\subsection{Numerical results}
Refer to Table \ref{TABLE1} and consider the final expected payoff to survivors $z_T=E[Z_T]$. We will shortly see that this is well approximated by its limiting value $\frac{1}{k_T}$. For the Riccati tontine and the parameters considered in the table, this gives $z_T\approx 6.96238$, from an initial investment of 1. Compare this with $e^{\mu T}$, ie the appreciation factor without mortality, namely 4.0552; This shows the impact of mortality credits. 

\vspace{0.1in}

We show in the Appendix that the Riccati tontine is bracketed by the extremal tontines with $\kappa_t=1$ and with $\kappa_t=k_t$. These only differ when the pool is reduced to a single survivor, and with any reasonable size $n$ for the initial pool, the probability of this is extremely small. Therefore we expect that the choice of $\kappa_t$ has little impact. In other words, the Riccati tontine comes very close to optimizing $z_T$, regardless of $\kappa_t$. Table \ref{TABLE2} illustrates this, by showing $k_T$ and $z_T$ for the two bracketing tontines, as well as for the Riccati tontine, at various pool sizes. 

\vspace{0.1in}

Of course, the Riccati $k_t$ don't vary with pool size, but the $z_T$ do. Recall that $z_T$ is conditioned on the survival of the representative agent, which acts to increase the number of survivors and decrease the final payout, though this effect disappears once $n$ is large. Also, observe that $k_t$ can $=0$ with small pools -- in that case, the possibility of being the sole survivor and receiving a large payout is enough to satisfy the regulatory constraint, even if nothing is returned to someone who dies while others survive. See Figure \ref{FIGbracket} for a pictorial illustration of this. Finally, note that the Riccati $k_t$ is indeed bracketed by the others, that they are indistinguishable once $n=50$, and there is no practical difference even with $n=20$. Figure \ref{FIGbracket} shows the bracketing $k_t$'s with $n=3$ and $n=10$. In the latter case there are three curves, but the Riccati curve and the $\kappa_t=k_t$ extremal curve are too close to tell apart visually.

\vspace{0.1in}

For infinite $n$, \eqref{Zforinfiniten} immediately implies that the standard deviation of $Z_T$ is 
$z_T\sqrt{e^{\sigma^2T}-1}$.
In the appendix, we will calculate the standard deviations for finite pool sizes. By selecting a volatility of $\sigma=20\%$, we show these in Table \ref{TABLE3}. For the chosen parameters, the standard deviations are greater than the means and initially increase before decreasing as $n$ grows.

\section{Variations}
\label{sectionthree}

As stated earlier, the formulation of the regulatory constraint we have adopted is not the only possibility for making precise the general principle that early death or lapsation should return the initial investment ``on average''. Returning briefly to the preliminary model of Section \ref{preliminarymodel}, which is one in which $\lambda_t$ is still the hazard rate of the representative agent, one could examine conditions such as $E[k_\tau Z_\tau\mid \tau<T]\ge 1$, where $\tau$ is the lifetime of the representative agent. Or $E[k_{\tau\land t}Z_{\tau\land t}]\ge 1$ for every $t<T$, corresponding to regulating the expected payoffs from strategies of lapsing at time $t$, if one survives till then. In fact, one could note the condition as $E[k_tZ_t]\ge 1$ for every $t<T$, though this condition doesn't reference the death or lapsation of any specific individual. We did analyze the latter using the approach of Lemma \ref{odeforu} below, but will content ourselves by simply noting that the results are indistinguishable from the Riccati tontine, once the pool size is not tiny.

\vspace{0.1in}

Another possible variation in design that we will consider is one of not paying out deaths (or lapses) continuously, but rather according to some fixed schedule (eg once per year). All individuals dying in a given period would get paid at the same rate. Operationally, this should be much easier to administer. One question that could be considered is whether the estate of person A, who dies at the beginning of the period and has to wait the entire period in order to be paid out, will feel disadvantaged relative to the estate of person B, who dies at the end of the period and gets immediately paid out. We believe that there is likely little difference. It is true that if A had died slightly earlier, then the previous period's higher payout rate would have applied. On the other hand, dying earlier would mean foregoing both the investment returns for the current period, and the mortality credits from the previous period. These effects should roughly cancel out.

\vspace{0.1in}

A different question is whether the optimal continuous scheme differs significantly from the optimal discrete scheme. Figure \ref{FIG2} shows a plot with a tontine of size 20 and yearly payouts. The continuous (Riccati) values are a curve, and the discrete values are points. Even with this small pool size, the numbers are close: $k_1$ is 0.9315 (continuous) and 0.9324 (discrete); $k_{20}$ 0.1436 (continuous) and 0.1468 (discrete). So the final average payout to survivors ($\approx \frac{1}{k_T}$) is only slightly less in the discrete case. 

\section{Stochastic Mortality}
\label{sectionfour}

To this point in the paper, our model for the mortality hazard rate has been deterministic, which is the classical approach to thinking about $(p_s)$, and the Gompertz-Makeham model. But for over 20 years, the actuarial literature has gravitated towards a doubly stochastic model for mortality in which the mortality hazard rate is stochastic. See the work by Milevsky \& Promislow (2001) for one of the earliest papers in that stream, or Biffis (2005) where this is studied under the terms of an affine process for mortality. Within the context of tontines, if one allows for stochastic mortality that is correlated with financial markets, the question arises whether the underlying investment in the tontine fund should be selected in a manner that accounts for this correlation. 

\vspace{0.1in}

So, in this section, we'd like to investigate and understand the impact of this correlation on tontine returns, and in particular, to know which is preferable, a positive or negative correlation. A natural way to study this is to move to a model in which hazard rates follow processes similar to equity returns and to assume a correlation between the noises driving hazard rates and equity returns. The model we select is as follows:

\subsection{Model} 
Let $\Lambda_t$ be a Markov stochastic hazard rate of the form
$$
d\Lambda_t=\phi(t,\Lambda_t)\,dt + \psi(t,\Lambda_t)\,dW_t.
$$
The underlying asset that the tontine invests in will satisfy
$$
dS_t=\mu S_t\,dt + \sigma S_t\,dB_t
$$
and we assume a correlation $\rho$ between $W_t$ and $B_t$, ie $d\langle W,B\rangle_t=\rho\,dt$.

Because the hazard rate will often be exponentiated, we will generally assume that $\Lambda_t$ is bounded above.  See Huang et al. (2017) for further discussion of stochastic hazard rate models within the context of biological age and for references to the demographic literature discussing whether there is an upper bound for mortality rates. 

\vspace{0.1in}

We conjecture the following analogue of Theorem \ref{theorem1}. 

\begin{conjecture}
\label{RiccatiforSM}
Let $\Lambda_t$ be a general mortality model, correlated with stock returns as above, but with $\Lambda_t$ bounded above by $\lambda_\infty$. Let $k_t$ be the optimal recovery scheme in the case $n=\infty$. Then, the tontine using this $k_t$ and with $\kappa_t=1$ satisfies the regulatory constraint.
\end{conjecture}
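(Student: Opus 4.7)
The plan is to mirror the structure of Theorem \ref{theorem1}(a)--(b), using a Girsanov change of measure to absorb the stock--hazard correlation into the dynamics of $\Lambda$. First I would characterize $k_t$ in the infinite-pool case: the analogue of \eqref{limitingdynamics} is
$$
dZ^{(\infty)}_t = \mu Z^{(\infty)}_t\,dt + \sigma Z^{(\infty)}_t\,dB_t + (1-k_t)\Lambda_t Z^{(\infty)}_t\,dt,
$$
so, defining $Q$ by $dQ/dP|_{\mathcal{F}_t} = \exp(\sigma B_t - \sigma^2 t/2)$, Girsanov makes $B^Q_s = B_s - \sigma s$ and $W^Q_s = W_s - \rho\sigma s$ correlated $Q$-Brownian motions, and the hazard rate acquires the $Q$-drift $\phi(s,\Lambda_s) + \rho\sigma\psi(s,\Lambda_s)$. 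The extremality condition $E^P[k_t Z^{(\infty)}_t] = 1$ then becomes the implicit equation $k_t\, e^{\mu t}\, E^Q[X_t] = 1$, where $X_t := \exp(\int_0^t (1-k_s)\Lambda_s\,ds)$. The uniform upper bound $\Lambda_s \le \lambda_\infty$ (which is precisely why it has been imposed) should guarantee a unique deterministic solution $k_t$ on $[0,T)$ by a standard fixed-point argument.

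Second, for the finite pool with $\kappa_t = 1$, I would factor $Z^{(n)}_{t-} = e^{(\mu-\sigma^2/2)t+\sigma B_t}\, J^{(n)}_{t-}$, where $J^{(n)}$ is the pure-jump mortality-credit factor that multiplies by $1 + (1-k_s)/(N_{s-}-1)$ at each non-representative death (that is, for $s<t$ with $N_{s-}>1$) and is constant otherwise. Writing $K_t = k_t + (1-k_t)\mathbb{1}_{\{N_{t-}=1\}}$ and applying the same change of measure, the regulatory constraint reduces, after dividing by $k_t e^{\mu t}$, to
$$
E^Q[J^{(n)}_{t-}] + \tfrac{1-k_t}{k_t}\, E^Q\bigl[J^{(n)}_{t-}\,\mathbb{1}_{\{N_{t-}=1\}}\bigr] \;\ge\; E^Q[X_t].
$$

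Third, I would try to establish this inequality by compensator comparison. Under $Q$, conditional on $\Lambda$, jumps of $N$ occur at rate $(N_{s-}-1)\Lambda_s$, so $E^Q[J^{(n)}_t] - 1 = \int_0^t E^Q[J^{(n)}_s(1-k_s)\Lambda_s \mathbb{1}_{\{N_{s-}>1\}}]\,ds$, whereas $E^Q[X_t] - 1 = \int_0^t E^Q[X_s(1-k_s)\Lambda_s]\,ds$. Subtracting gives a closed integral equation for $E^Q[X_t - J^{(n)}_t]$ whose inhomogeneous term is $\int_0^t (1-k_s)E^Q[J^{(n)}_s\Lambda_s\mathbb{1}_{\{N_{s-}=1\}}]\,ds$; the task is to show that this deficit and the lone-survivor bonus on the left cancel appropriately. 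A sanity check at $n=1$ reduces the required inequality to $e^{\mu t} \ge 1$, which suggests that the natural asset-pricing hypothesis $\mu\ge 0$ is implicitly needed.

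The main obstacle is the last comparison. Gr\"onwall's inequality with $\Lambda\le\lambda_\infty$ yields only an $L^1$-deficit bound of order $\int_0^t E^Q[J^{(n)}_s\mathbb{1}_{\{N_{s-}=1\}}]\,ds$, which grows in $t$ and in general exceeds the single-time bonus $E^Q[J^{(n)}_{t-}\mathbb{1}_{\{N_{t-}=1\}}]$, so the crude estimate does not close the argument. A successful proof presumably requires a genuinely pathwise decomposition exhibiting
$$
J^{(n)}_{t-} + \tfrac{1-k_t}{k_t} J^{(n)}_{t-}\mathbb{1}_{\{N_{t-}=1\}} - X_t = M_t + A_t,
$$
where $M_t$ is a local $Q$-martingale starting at $0$ and $A_t$ is a predictable, nondecreasing process. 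Constructing such a decomposition --- in particular, matching the compensator of the jump $N{:}\,2 \to 1$ against the continuous drift of $X$ on $\{N_{s-}=1\}$ in the correlated setting --- is the heart of the difficulty, and is presumably why the authors offer this only as a conjecture.
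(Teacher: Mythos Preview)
The statement is labelled a \emph{Conjecture} in the paper, and the authors offer no proof of it; there is therefore nothing in the paper to compare your attempt against. Your final paragraph already acknowledges this. As a line of attack, the reduction is sound: the Girsanov change of measure is the right device for stripping off the stock noise, the resulting inequality
\[
E^Q\bigl[J^{(n)}_{t-}\bigr] + \frac{1-k_t}{k_t}\, E^Q\bigl[J^{(n)}_{t-}\,\mathbb{1}_{\{N_{t-}=1\}}\bigr] \;\ge\; E^Q[X_t]
\]
is correctly derived, and your $n=1$ sanity check isolating the hypothesis $\mu\ge 0$ is pertinent (the same assumption is already implicit in the paper's proof of Theorem~\ref{theorem1}(b), through the term $\mu k_s u_1(s)$).

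But your proposal is not a proof, as you concede: the Gr\"onwall estimate does not close, and you have not exhibited the decomposition you say is needed. One further obstruction worth recording is that conditioning on the full path of $\Lambda$ does \emph{not} reduce the problem to the deterministic Theorem~\ref{theorem1}(b), because the $k_t$ defined by $k_t=1/E[Z^{(\infty)}_t]$ is not the Riccati solution associated with that particular realisation of $\Lambda$. One can check that this $k_t$ still satisfies a Riccati equation, but with $\lambda_t$ replaced by the tilted average $\bar\lambda_t:=E[\Lambda_t Z^{(\infty)}_t]/E[Z^{(\infty)}_t]$; the paper's $u_j$-system (Lemma~\ref{odeforu}) then fails to close, since the analogous derivation produces terms $E[\Lambda_s L_s\mathbb{1}_{\{N_s=j\}}]$ rather than $\bar\lambda_s\,u_j(s)$. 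So neither your compensator route nor a direct extension of the paper's own method for Theorem~\ref{theorem1}(b) goes through without a new idea, which is consistent with the authors' decision to state the result only as a conjecture.
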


\subsection{Asymptotics: cruise ships or funeral parlours?}
One can imagine two types of asset classes or investment portfolios in the context of stochastic mortality and correlations. The first could be described as investments (or companies) such as cruise ships, for example, that might benefit from unexpected growth in older people, that is, a decline in mortality rates. Cruise ships (stocks) would be negatively correlated with mortality. If there is an upwards shock to mortality rates (more deaths, fewer older people), there is less demand for cruise ships. On the other hand, funeral parlours would ``benefit'' from increases in mortality rates, and we might say that the correlations are positive with those types of assets. Now, we aren't advocating for investing in either, and it's an empirical question whether those correlations are as we hypothesize; yet it isn't a stretch to assume that investments with positive or negative correlations to mortality might exist.

\vspace{0.1in}

In our numerical experiments, we will assume a particular form for $\phi$ and $\psi$. If this was
$$
d\Lambda_t=(\Lambda_t-\eta)(g\,dt + \epsilon\,dW_t)
$$
then the solution $\Lambda_t=\eta + (\lambda_0-\eta)e^{(g-\frac12\epsilon^2)t+\epsilon W_t}$ would be a perturbation of Gompertz-Makeham mortality (we take $g=\frac{1}{b}+\frac{\epsilon^2}{2}$ and $\lambda_0=\eta+\frac{1}{b}e^{\frac{x-m}{b}}$). However, these dynamics will not yield a hazard rate that is bounded above. So we modify them by setting $\phi(t,\lambda)=0=\psi(t,\lambda)$ for $\lambda\ge\lambda_\infty$. In other words, we will follow GBM dynamics until the first time that $\Lambda_t$ exceeds some level $\lambda_\infty$, and then we will freeze the hazard rate, making 
$\Lambda_t=\lambda_\infty$ for subsequent $t$. For related models, see Huang et al. (2017) and Ashraf (2023).

It is hard to calculate $k_t$ with such models, without taking a simulation approach. Instead of doing that, we will adopt the specific mortality model described above, and as the notation suggests, treat $\epsilon$ as a small parameter. Taking $n=\infty$ this allows us to expand $k_t$ and $z_t$ in powers of $\epsilon$, to obtain asymptotics as $\epsilon\downarrow 0$. 

\vspace{0.1in}

We could have treated the variance of $Z_T$ similarly, to obtain an asymptotic Sharpe ratio. What we have chosen to do instead is to find the asymptotics of the utility of receiving $Z_T$. We can then analyze the dependence of this utility on $\rho$, and in particular, find whether positive $\rho$ or negative $\rho$ is preferable. 

Take $n=\infty$, so that
$$
Z_t=e^{(\mu-\frac12\sigma^2)t + \int_0^t(1-k_s)\Lambda_s\,ds + \sigma B_t}.
$$
Let $z_t=E[Z_t]$ and $k_t=\frac{1}{z_t}$, observing that $z_t$ is finite because we have assumed that $\Lambda_t$ is bounded. Using CRRA utility, with risk aversion $\gamma$, let $\theta$ be the utility of the final payout $Z_T$. In other words, $\theta=\frac{1}{1-\gamma}E[Z_T^{1-\gamma}]$. Expanding to first order, write $z_t=z^0_t(1+\epsilon\sigma\rho \bar z_t)+O(\epsilon^2)$ and $\theta=\theta_0(1+\epsilon\sigma\rho\bar \theta)$. We will shortly see that this normalization removes any dependence of  $\bar z$ and $\bar\theta$ on $\sigma$ and $\rho$. Then
$$
k_t=\frac{1}{z_t}=\frac{1-\epsilon\sigma\rho\bar z_t}{z^0_t}+O(\epsilon^2)=k^0_t(1-\epsilon\sigma\rho\bar z_t)+O(\epsilon^2).
$$
With the chosen mortality model, write $\lambda_t=\eta+(\lambda_0-\eta)e^{t/b}$ (recall that $g=\frac{1}{b}+\frac{\epsilon^2}{2}$) and assume that $\lambda_\infty>\lambda_T$. The expansions can be calculated via the following.
\begin{lemma}
\label{zbarexpansion}
$k^0_t$ is the Riccati tontine associated with $\lambda_t$, $z^0_t=\frac{1}{k^0_t}$, $\bar z_0=0$, and 
\begin{align*}
\bar z'_t&=\lambda_t[ t(1-k^0_t)+k^0_t\bar z_t]\\
\theta^0&=\frac{(z^0_T)^{1-\gamma}}{1-\gamma}e^{-\frac{\gamma(1-\gamma)}{2}\sigma^2T}\\
\bar\theta&=(1-\gamma)\int_0^T[(1-\gamma)(1-k^0_s)(\lambda_s-\eta)s+k^0_s\bar z_s\lambda_s]\,ds.
\end{align*}
\end{lemma}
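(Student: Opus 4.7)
\medskip

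\noindent\emph{Proof plan.} The approach is a first-order perturbation expansion in $\epsilon$, exploiting the fact that everything is driven by a Gaussian pair $(B,W)$ once one linearizes in $\epsilon$. The SDE for $\Lambda$ integrates explicitly to $\Lambda_s=\eta+(\lambda_0-\eta)e^{s/b+\epsilon W_s}$, whence $\Lambda_s=\lambda_s+\epsilon(\lambda_s-\eta)W_s+O(\epsilon^2)$. Combining this with the ansatz $k_s=k^0_s(1-\epsilon\sigma\rho\bar z_s)+O(\epsilon^2)$ gives $(1-k_s)\Lambda_s=(1-k^0_s)\lambda_s+\epsilon[(1-k^0_s)(\lambda_s-\eta)W_s+\sigma\rho\lambda_s k^0_s\bar z_s]+O(\epsilon^2)$, so the exponent in $Z_t$ decomposes as a deterministic term plus $\sigma B_t$ plus an $O(\epsilon)$ linear functional of $W$.

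First I would dispatch the zeroth order. Writing $Z_t=\exp(A_t+\epsilon C_t+O(\epsilon^2))$ with $A_t=(\mu-\tfrac12\sigma^2)t+\int_0^t(1-k^0_s)\lambda_s\,ds+\sigma B_t$, the usual lognormal moment formula yields $z^0_t=e^{\mu t+\int_0^t(1-k^0_s)\lambda_s\,ds}$, and imposing $k^0_tz^0_t=1$ (differentiate once) recovers the Riccati ODE of Theorem~\ref{theorem1}(a), so $k^0_t$ is indeed the Riccati tontine. The formula for $\theta^0$ drops out by the same moment generating function computation applied to $e^{(1-\gamma)A_T}$, the quadratic exponent collapsing to $-\tfrac{\gamma(1-\gamma)}{2}\sigma^2T$ after combining the $(\mu-\tfrac12\sigma^2)T$ and $\tfrac12(1-\gamma)^2\sigma^2T$ contributions.

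For the $O(\epsilon)$ correction I would use the Gaussian identity $E[Xe^Y]=\mathrm{Cov}(X,Y)E[e^Y]$ for centered jointly Gaussian $(X,Y)$. Expanding $Z_t=e^{A_t}(1+\epsilon C_t)+O(\epsilon^2)$ reduces $z_t-z^0_t$ to $\epsilon E[e^{A_t}C_t]$. The identity applied with $Y=\sigma B_t$ gives $E[e^{\sigma B_t}W_s]=\sigma\rho s\cdot e^{\sigma^2t/2}$ for $s\le t$, so $E[e^{A_t}W_s]=z^0_t\sigma\rho s$; the deterministic piece of $C_t$ contributes $z^0_t\sigma\rho\int_0^t k^0_s\bar z_s\lambda_s\,ds$. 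Summing and matching to $z^0_t\epsilon\sigma\rho\bar z_t$ gives $\bar z_t$ as an integral, and differentiating produces the stated ODE with $\bar z_0=0$. The same identity with $Y=(1-\gamma)\sigma B_T$, which multiplies the covariance by $1-\gamma$, delivers the $\bar\theta$ formula after the Taylor expansion $Z_T^{1-\gamma}=e^{(1-\gamma)A_T}(1+\epsilon(1-\gamma)C_T)+O(\epsilon^2)$.

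The only step requiring real care is self-consistency: $k_s$ is defined in terms of $z_s$, which itself is being expanded in $k$, so I need to verify that the $\bar z_t$ implicit in the $k$-ansatz matches the $\bar z_t$ read off from $z_t$. Matching $O(\epsilon)$ coefficients in $k_tz_t=1$ does so immediately. The one honest technical point, which I would handle but not dwell on, is justifying term-by-term expectation in $\epsilon$ and dominated differentiation under the expectation; the assumed upper bound $\lambda_\infty$ on the hazard rate, together with the assumption $\lambda_\infty>\lambda_T$ (so the cap is not felt at zeroth order), gives uniform integrability on a small $\epsilon$-interval, making the expansion rigorous.
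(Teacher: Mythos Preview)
Your proposal is correct and follows essentially the same route as the paper: linearize $\Lambda_s$ in $\epsilon$, recognize the resulting exponent of $Z_t$ as Gaussian, and compute first and $(1-\gamma)$th moments to order $\epsilon$. The only cosmetic difference is that you Taylor-expand $e^{A_t+\epsilon C_t}\approx e^{A_t}(1+\epsilon C_t)$ and invoke the Stein-type identity $E[Xe^Y]=\mathrm{Cov}(X,Y)E[e^Y]$, whereas the paper keeps $\epsilon C_t$ inside the exponent and reads the answer off the lognormal MGF after computing the variance of $A_t+\epsilon C_t$ to first order---two packagings of the same Gaussian calculation.
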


Observe that $\bar z$ is positive, and that it grows in magnitude over time. So if $\rho>0$, this acts to increase the average final payoff, while having $\rho <0$ decreases it. For $0<\gamma<1$,  we have $\bar\theta >0$, so that positive $\rho$ increases utility, while negative $\rho$ decreases it. Risk-tolerant people (i.e. those with $0<\gamma<1$) will prefer a positive $\rho$. They will prefer a tontine that invests in funeral parlours over cruise ships (at least, when the tontine has a large pool, and when the randomness in hazard rates is small).

\vspace{0.1in}

When $\gamma>1$,  we have $\theta^0<0$, so utility rises when $\rho\bar\theta<0$ and it falls when $\rho\bar\theta>0$.  For $\gamma>1$ but close to 1, we can see that $\bar\theta>0$. But as we increase $\gamma$ this changes, and $\bar\theta$ turns negative. In other words, there is a $\gamma_0$ (independent of $\sigma$) such that cruise ships are preferred (conditional on survival) once $\gamma>\gamma_0$. For the parameters we have been working with, one can calculate numerically that $\gamma_0= 1.0598$;  Apparently, one doesn't have to be very risk averse before a negative $\rho$ is preferred. Even though the presence of negative correlation decreases average payoffs, the hedging effect it provides will increase utility. 

\vspace{0.1in}

For those who die early (at $t<T$), similar conclusions hold. Except that when $\gamma>\gamma_0$ and $t\approx 0$ the preference is for positive correlation, switching to a preference for negative correlation as $t$ rises. 

\vspace{0.1in}

For our numerical experiment, we choose $\sigma=20\%$. We don't have good data for $\epsilon$, but imagine that $\epsilon=15\%$;  then at age 75 ($t=10$ from initial age $x=65$) this means that one standard deviation $\pm$ changes the hazard rate by the same amount as ageing $\pm b\epsilon\sqrt{t}$, which with our parameters is $\pm 6.32$ years.  This does not seem unreasonable. 

\vspace{0.1in}

We don't have real-world calibrated data for $\rho$ either, but to speculate, suppose that 20\% of the market is closely linked to mortality. Suppose that a one-year mortality shock that ages the population by 4 years (so $\epsilon W_1b=4$) moves that segment of the market up by 30\% (and therefore the total market by the product, ie 6\%). Thus $e^{\sigma\rho W_1}=1.06$ and solving, we find that this corresponds to $\rho=10.9\%$; With those numbers, $\bar z_T=5.268$ and the expected final return moves from $z^0_T=6.962$ to $z_T\approx 7.083$, an increase of about $1.7\%$; So lower order in comparison to the tontine's basic mortality credits, but not negligible either.

\section{Conclusion}
\label{sectionfive}

This paper introduces and presents a new type of investment scheme called the Riccati tontine, which is a form of longevity risk pooling, named after the Italian mathematician Jacobo Riccati. Although some may question the need for yet another financial or insurance product, the Riccati tontine fills a gap in (or completes) the market by introducing a unique offering that cannot be easily replicated or synthesized by existing financial instruments or insurance policies. The Riccati tontine serves as an accumulation-based investment scheme and also functions as a decentralized longevity risk-sharing tool. Importantly, it differs from other designs for longevity risk pooling due to two novel engineering elements; the first is driven by securities regulation and the second by a personal hedging motive. The design of the Riccati tontine is such that the representative investor receives their money back on average, upon dying or lapsation. 

\vspace{0.1in}

Furthermore, in the Riccati tontine the underlying funds are invested in risky assets whose return shocks negatively correlate with mortality. This partially `hedges' the financial risk of the funds within the tontine by generating more death -- and thus higher expected payouts conditional on surivval -- in the economic scenarios where investment values decline. As a stylized example, one can consider an investment in companies that manage and arrange cruises for the elderly and retirees. Those cruise company stocks will likely decline in value {\em if} the demand for (old people) cruises falls, which is likely to occur during a pandemic. This means that the Riccati tontine fund is expected to perform worse during pandemics but yet will not lose participants' money on average. In sum, the central mathematical contribution of this paper is to prove that the recovery schedule that generates this dual outcome -- money back in expectation and a natural hedge for the fund when mortality is stochastic -- satisfies a first-order differential equation that is quadratic in the unknown function, a.k.a. a Riccati equation.

\section{Appendix}
\subsection{The limiting case $n=\infty$}
To understand the limiting dynamics \eqref{limitingdynamics}, consider large but finite $n$. Then approximately $\lambda_tN_t\Delta t$ deaths occur in a time interval $[t,t+\Delta t]$, each of which raises $Z_t$ by 
$$
\approx Z_t-Z_{t-}=Z_{t-}\frac{1-K_t}{N_t},
$$
for a net increment of $\approx Z_t(1-K_t)\lambda_t\Delta t$. 

\subsection{Hazard rates}
\label{Gompertzreview}
For those who are not familiar with the Gompertz-Makeham model, we recommend any actuarial textbook, such as Dickson, et al. (2019). In this paper, we use the parameterization of  Milevsky and Salisbury (2015) as it relates to tontines, whereby $\lambda_t=\eta +\frac{1}{b}e^{\frac{x+t-m}{b}}$. We've also used $p_t$ for a survival probability, so $p_t=e^{-\int_0^t \lambda_s\,ds}$. In the discrete section, $p_i$ is the survival probability for the $i$th interval, so $p_i=e^{-\eta\Delta-e^{\frac{x+(i-1)\Delta-m}{b}}(e^{\frac{\Delta}{b}}-1)}$

\subsection{Proof of Theorem \ref{theorem1}}
Fix $t<T$ as the lifetime of the representative agent, and let $s$ range over $[0,t)$. Set $u_j(s)=E[L_{s}1_{\{N_{s}=j\}}]$, and take the convention that $u_{n+1}(s)=0$. It is clear that the $u_j(s)$ remain consistent if we vary $t>s$.
\begin{lemma} For $1\le j\le n$ we have
$$
u'_j(s)=\mu u_j(s)+\lambda_s[j(1-\frac{k_s}{j+1})u_{j+1}(s)-(j-1)u_j(s)]
$$
\label{odeforu}
\end{lemma}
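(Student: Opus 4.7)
The plan is to apply Dynkin's formula (equivalently, It\^o's formula combined with compensation of the jump martingale) to the test function $g_j(l,k)=l\cdot 1_{\{k=j\}}$ evaluated at the Markov process $(L_s,N_s)$, then take expectations to kill the martingale pieces. The conditioning that fixes the representative agent's lifetime as $t$ means that for $s<t$ the representative agent is alive, so $N_{s-}\ge 1$ throughout and $N_{s-}\ge 2$ whenever a jump occurs. Consequently the surrender factor at every jump in $[0,t)$ is $k_s$ rather than $\kappa_s$, and the rate of deaths at time $s$ given $N_{s-}=k$ is $(k-1)\lambda_s$, since only the $k-1$ ``other'' agents can die.

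Next I would decompose the infinitesimal generator into its continuous and jump parts. Between jumps $L_s$ evolves as GBM with drift $\mu$, so the It\^o contribution to $d(L_s\,1_{\{N_s=j\}})$ is $\mu L_s\,1_{\{N_s=j\}}\,ds$ plus a Brownian martingale. The jump part, starting from state $(l,k)$, contributes the displacement
\[
(k-1)\lambda_s\bigl[l(1-\tfrac{k_s}{k})\,1_{\{k-1=j\}}-l\,1_{\{k=j\}}\bigr]
\]
up to a compensating martingale. Taking expectations removes every martingale piece. Splitting the first summand by $k=j+1$ yields $j\lambda_s(1-\frac{k_s}{j+1})u_{j+1}(s)$, and the second summand at $k=j$ yields $-(j-1)\lambda_s u_j(s)$. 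Adding the drift contribution $\mu u_j(s)$ and differentiating in $s$ delivers exactly the stated ODE.

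The boundary conventions fall out automatically: for $j=n$ the ``in'' term sits at $u_{n+1}(s)$, which equals $0$ by convention because $N_s\le n$; and for $j=1$ the ``out'' coefficient $(j-1)\lambda_s$ vanishes, correctly reflecting that once only the representative agent remains no further deaths occur before $t$.

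The main obstacle is really bookkeeping: being explicit that the representative agent's deterministic survival shifts the effective jump intensity from $N_{s-}\lambda_s$ to $(N_{s-}-1)\lambda_s$, and that the correct multiplicative factor at each jump is $(1-k_s/N_{s-})$ with the pre-jump $N_{s-}$. The analytic prerequisites for Dynkin are routine: $L_s$ is dominated by $n\,e^{\mu s+\sigma B_s-\sigma^2 s/2}$ (a supermartingale multiplied by a bounded correction), so it has uniformly bounded expectation on $[0,t]$, and the jump intensity $(N_{s-}-1)\lambda_s$ is bounded by $(n-1)\sup_{r\le t}\lambda_r<\infty$, which justifies interchanging differentiation with expectation.
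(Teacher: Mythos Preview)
Your proposal is correct and follows essentially the same approach as the paper: the paper computes $u_j(s+\Delta s)-u_j(s)$ to first order by splitting on $\{\Delta N_s=0\}$ versus $\{\Delta N_s=-1\}$ and reading off the drift, which is exactly the generator calculation you carry out via Dynkin's formula. Your version is more careful about integrability and about why the jump intensity is $(N_{s-}-1)\lambda_s$ under the conditioning, but the underlying decomposition and bookkeeping are identical.
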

\begin{proof} To first order, 
\begin{align*}
&u_j(s+\Delta s)=E[L_{s+\Delta s},N_{s+\Delta s}=j]\\
&\qquad =E[L_{s+\Delta s},N_{s+\Delta s}=j, \Delta N_s=0]+E[L_{s+\Delta s},N_{s+\Delta s}=j, \Delta N_s=-1]\\
&\qquad =E[L_{s}+\mu L_s\Delta s+\sigma L_s\Delta B_s,N_s=j, \Delta N_s=0]\\
&\qquad\qquad\qquad+E[L_s+\mu L_s\Delta s+\sigma L_s\Delta B_s-\frac{k_sL_s}{j+1},N_s=j+1, \Delta N_s=-1]\\
&\qquad = u_j(s)(1+\mu\Delta s)(1-(j-1)\lambda_s\Delta s)+u_{j+1}(s)(1-\frac{k_s}{j+1}+\mu\Delta s)\lambda_sj\Delta s\\
&\qquad =u_j(s)+[u_j(s)(\mu-(j-1)\lambda_s)+u_{j+1}(s)\lambda_s (j+1-k_s)\frac{j}{j+1}]\Delta s.
\end{align*}
Now take limits.
\end{proof}
Note that the initial conditions are that $u_j(0)=n$ if $j=n$ and $=0$ otherwise.

\begin{proof}[Proof of Theorem \ref{theorem1}b]
Fix $t<T$ as the lifetime of the representative agent, and let $s$ range over $[0,t)$. If $N_\cdot$ jumps at $s$ and $N_{s-}\ge 2$ then $\Delta Z_s=\frac{1-k_s}{N_{s-}-1}Z_{s-}$. The jump intensity is $(N_{s-}-1)\lambda_s$, since there are $N_{s-}-1$ other individuals in the pool, if we exclude the representative agent. We have 
$$
dZ_s=\mu Z_s\,ds + \sigma Z_s\,dB_s+\Delta Z_s
$$
and the compensator of the jumps is
$$
\int_0^s (1-k_q)\lambda_qZ_q1_{\{N_q\ge 2\}}\,dq=\int_0^s (1-k_q)\lambda_q(Z_q-Z_q1_{\{N_q=1\}})\,dq.
$$
Therefore 
\begin{multline*}
Z_s=1+\int_0^s [\mu+(1-k_q)\lambda_q]Z_q\,dq+\int_0^s\sigma Z_q\,dB_q-\int_0^s(1-k_q)\lambda_qZ_q1_{\{N_q=1\}}\,dq + \\
+ \text{compensated sum of jumps}.
\end{multline*}
Taking expectations, we have that 
$$
z_s=1+\int_0^s [\mu+(1-k_q)\lambda_q]z_q\,dq-\int_0^s (1-k_q)\lambda_q u_1(q)\,dq,
$$
so
\begin{equation}
\label{zode}
z'_s=\mu z_s + (1-k_s)\lambda_s[z_s-u_1(s)].
\end{equation}
In general,
\begin{equation}
\label{kappaequation}
E[K_sZ_s]=k_s[z_s-u_1(s)] +\kappa_su_1(s)=k_sz_s+(\kappa_s-k_s)u_1(s),
\end{equation}
for $s<t$. Combining the above with Lemma \ref{odeforu} and the Riccati equation, we can compute $\frac{d}{ds}E[K_sZ_s]$. After some cancellation, we arrive at the following expression.
$$
\mu k_su_1(s)+[\kappa_s-k_s][\mu u_1(s)+\lambda_s(1-\frac{k_s}{2})u_2(s)] + \kappa'_su_1(s).
$$
Substituting $\kappa_s=1$, the last term drops out, while the other terms are $\ge 0$, so $E[K_sZ_s]$ is increasing. Taking limits as $s\uparrow t$, we see that $E[K_tZ_{t-}]\ge E[K_0Z_0]=1$, as required.
\end{proof}
Note that another approach to showing \eqref{zode} is to use the expression $z_s=\sum_{j=1}^n\frac{u_j(s)}{j}$ together with Lemma \ref{odeforu}.

\begin{proof}[Proof of Theorem \ref{theorem1}c]
The desired $\kappa$ is obtained by setting \eqref{kappaequation} equal to 1 and solving for $\kappa_t$ (note that $u_1$ is determined by $k$). This will automatically make $k_t$ extremal, provided we can verify that the imposed constraints hold, namely that $k_t\le\kappa_t\le 1$ for every $t$. Or in other words, that $k_t[z_t-u_1(t)]+k_tu_1(t)\le 1\le k_t[z_t-u_1(t)]+u_1(t)$, ie.
$$
k_tz_t\le 1\le k_tz_t+(1-k_t)u_1(t).
$$
To see the first inequality, differentiate $k_tz_t$ using \eqref{zode} and the Riccati equation. After some cancellations, we obtain that:
$$
(k_tz_t)'=-\lambda_tk_t(1-k_t)u_1(t)\le 0
$$
so $k_tz_t\le k_0z_0=1$. A similar calculation shows that 
$$
(k_tz_t+(1-k_t)u_1(t))'=\mu u_1(t)+\lambda_t(1-k_t)(1-\frac{k_t}{2})u_2(t)\ge 0
$$
so $k_tz_t+(1-k_t)u_1(t)\ge k_0z_0+(1-k_0)u_1(0)=1$, showing the second inequality.
\end{proof}

\subsection{Proof of Theorem \ref{maximality}} 

We now give the proof that extremal tontines maximize the expected final payoff.
\begin{proof}
Suppose $\ell_t$ is any recovery schedule, satisfying $\kappa_t\ge \ell_t\ge 0$ as well as the regulatory constraint. Let the $u_j$ satisfy Lemma \ref{odeforu} for $\ell_t$, and set $z_t=\sum_{j=1}^n\frac{u_j(t)}{j}$. The regulatory constraint can be expressed in the form
$\ell_t\sum_{j=2}^n \frac{u_j(t)}{j}+\kappa_t u_1(t)\ge 1$. Now define $\bar\ell_t$ by
\begin{equation}
\bar\ell_t=\max\Big(0,\frac{1-\kappa_t u_1(t)}{\sum_{j=2}^n \frac{u_j(t)}{j}}\Big), 
\label{extremaliteration}
\end{equation}
so $0\le\bar\ell_t\le \ell_t\le\kappa_t$. Then re-solve the equations of Lemma \ref{odeforu} using $\bar\ell_t$ instead of $\ell_t$, to give us new functions $\bar u_j(t)$ and $\bar z_t=\sum_{j=1}^n\frac{\bar u_j(t)}{j}$. By an ODE comparison theorem, $\bar u_j(t)\ge u_j(t)$ for each $j$, so the regulatory constraint still holds for the new tontine. Note that since $\kappa_t$ is continuous, so is $\bar\ell_t$. 

Now iterate this process (transfinitely if necessary), passing to the limit until $\bar z_T$ cannot be raised further (and still writing $\bar\ell_t$ etc. for the limiting objects). If there was an interval on which the regulatory constraint held with a strict inequality and $\bar \ell_t>0$, then we could repeat the process and strictly raise $\bar z_T$, which is a contradiction. Therefore on a dense set we have that either the constraint holds with equality or $\bar\ell_t=0$. The $\bar u_j$ are increasing limits of uniformly Lipschitz functions, by definition, so they are continuous. As a decreasing limit of continuous functions, $\bar\ell_t$ is upper semi-continuous, so in fact equality holds everywhere.  Therefore $\bar\ell_t$ is also continuous.  Since the original $\ell_t$ was arbitrary, this shows that a maximal solution exists and is extremal.

But for an extremal $\bar\ell_t$, \eqref{extremaliteration} in combination with Lemma \ref{odeforu} defines a system of Lipschitz ODE's (using here that the denominator is uniformly bounded below). Therefore there is a a unique solution starting from $\bar\ell_0=1$. Thus, there is only one extremal solution, namely the $k_t$ assumed in the theorem, which, therefore, is maximal.
\end{proof}

\subsection{Integral form for $k_t$}
If we let $u_t=e^{-\mu t}p_ty_t$  we get
$$
u'_t=-(\mu+\lambda_t)u_t +e^{-\mu t}p_ty'_t=-\lambda_te^{-\mu t}p_t.
$$
So $u_t-1=-\int_0^t \lambda_se^{-\mu s}p_s\,ds=e^{-\mu t}p_t-1+\mu\int_0^t p_se^{-\mu s}\,ds$, using integration by parts. This gives \eqref{Bernoullisolution}.

\subsection{Bracketing tontines}
The extremal tontine with $\kappa_t=k_t$ satisfies $k_tz_t=1$. Differentiating, we get
$$
k'_tz_t+\mu k_tz_t+\lambda_tk_t(1-k_t)(z_t-u_1(t))=0.
$$
Substituting $z_t=\frac{1}{k_t}$ and simplifying, we end up with 
$$
k'_t=-(\mu+\lambda_t) k_t+\lambda_tk_t^2+\lambda_tk_t^2(1-k_t)u_1(t).
$$
Since the last term is $\ge 0$ and ODE for the Riccati $k_t$ is that
$$
k'_t=-(\mu+\lambda_t) k_t+\lambda_tk_t^2,
$$
a standard ODE comparison theorem shows that this extremal $k_t$ is $\ge$ the Riccati one.

Likewise, the extremal tontine with $\kappa_t=1$ satisfies $k_t[z_t-u_1(t)]+u_1(t)=1$. Differentiating, we have
\begin{multline*}
k'_t[z_t-u_1(t)]+\mu k_t[z_t-u_1(t)]+\lambda_tk_t(1-k_t)[z_t-u_1(t)]\\
-\lambda_tk_t(1-\frac{k_t}{2})u_2(t)++\mu u_1(t)+\lambda_t(1-\frac{k_t}{2})u_2(t)=0.
\end{multline*}
Substituting $z_t-u_1(t)=\frac{1-u_1(t)}{k_t}$ and simplifying, we have
$$
k'_t=-(\mu+\lambda_t) k_t+\lambda_tk_t^2-\frac{k_t(1-k_t)}{1-u_1(t)}[\mu u_1(t)+\lambda_t(1-\frac{k_t}{2})u_2(t)].
$$
Since the last term is $\le 0$, the ODE comparison theorem implies that this extremal $k_t$ is $\le$ the Riccati one, as claimed.

Table \ref{TABLE2} involves computing $z_T$ for the extremal tontines with two choices of $\kappa_t$, namely $\kappa_t=k_t$ and $\kappa_t=1$, as well for the Riccati tontine. The latter is found by solving the equations of Lemma \ref{odeforu} numerically. To explore the former, note that being extremal means that
$$
1=k_t\sum_{j=2}^n\frac{u_j(t)}{j} + \kappa_t u_1(t)
$$
for $0\le t\le T$. Solving this for $k_t$ and substituting back into the equations of Lemma \ref{odeforu} gives a system of ODE's for the $u_j(t)$, which can be solved numerically.

\subsection{Variances}
Set $v_j(t)=E[L_t^2,N_t=j]$,  $j=1, \dots, n$. As in Lemma \ref{odeforu}, take $v_{n+1}=0$. Then to first order
\begin{align*}
&v_j(t+\Delta t)=E[L_{t+\Delta t}^2,N_{t+\Delta t}=j]\\
&\qquad =E[(L_t+\mu L_t\Delta t+\sigma L_t\Delta B_t)^2,N_t=j, \Delta N_t=0]\\
&\qquad\qquad\qquad+E[(L_t-\frac{k_tL_t}{j+1})^2,N_t=j+1, \Delta N_t=-1]\\
&\qquad = v_j(t)(1+(2\mu+\sigma^2)\Delta t)(1-(j-1)\lambda_t\Delta t)+v_{j+1}(t)(1-\frac{k_t}{j+1})^2\lambda_tj\Delta t\\
&\qquad =v_j(t)+[v_j(t)(2\mu+\sigma^2-(j-1)\lambda_t)+v_{j+1}(t)\lambda_tj(1-\frac{k_t}{j+1})^2]\Delta t.
\end{align*}
Or
$$
v'_j(t)=v_j(t)(2\mu+\sigma^2-(j-1)\lambda_t)+v_{j+1}(t)\lambda_tj(1-\frac{k_t}{j+1})^2.
$$
And now the variance of the representative agent's payout at time $T$ is
$$
E[\Big(\frac{L_T}{N_T}\Big)^2]-E\Big[\frac{L_T}{N_T}\Big]^2=\sum_{j=1}^n\frac{1}{j^2}v_j(T)-z_T^2.
$$

\subsection{Periodic payouts}
Instead of deaths triggering immediate payouts, suppose that deaths trigger payouts over $M$ periods. Then $\Delta=\frac{T}{M}$ is the length of each period. All deaths in a period get paid out simultaneously at the end of the period, so the order of death within the period does not matter. Thus the payout function consists of a sequence of values $k_1, \dots, k_M$. In the final year, the deaths get paid out at the rate of $k_M$, and then the survivors split the remainder. To keep things simple, we will not adjust the payout even when deaths exhaust the pool and leave some money on the table.

We have the following dynamics: 
$$
L_0=n, \quad L_{i}=L_{i-1}(e^{(\mu-\frac12\sigma^2)\Delta+\sigma \sqrt{\Delta}Z_i}-k_{i}\frac{N_{i-1}-N_{i}}{N_{i-1}})
$$
where the $Z_i$ are standard normal. Here $L_i$ and $N_i$ denote the total assets and survivors at the end of the $i$th period (ie after deaths are counted and payouts have been subtracted). 

We will carry out calculations using $u_j(i)=E[L_i,N_i=j]$, where in the expectation, we require the representative agent to survive the $i$th period. We take the convention that $u_j(0)=0$ if $j<n$ and $n$ otherwise.
Therefore
\begin{align}
u_j(i)&=\sum_{\ell=j}^n \binom{\ell-1}{j-1}p_i^{j-1}(1-p_i)^{\ell-j}u_\ell(i-1)[e^{\Delta \mu}-k_{i}(\frac{\ell-j}{\ell})] \nonumber \\
&=\frac1{p_i}\sum_{q=0}^{n-j}\text{NB}(q,j,p_i)u_{j+q}(i-1)[e^{\Delta \mu}-k_{i}(\frac{q}{j+q})]
\label{periodicdynamics}
\end{align}
for $1\le i\le M$, where $p_i$ is the period survival probability, and the NB are probabilities for the negative binomial distribution. We are also using that if the agent survives through period $i$, then they naturally survived period $i-1$ as well. 

For the constraint on $k_i$, we must assume the representative agent dies in the $i$th period. Then just before the $i$th payout, the total assets are $\tilde L_i=L_{i-1}e^{(\mu-\frac12\sigma^2)\Delta+\sigma \sqrt{\Delta}Z_i}$, so the payout received by the agent is $k_i\frac{\tilde L_i}{N_{i-1}}$. Our constraint is that this has mean $=1$. In other words, that 
\begin{equation}
1=k_{i} e^{\Delta\mu}\sum_{j=1}^n\frac{u_j(i-1)}{j}
\label{periodicconstraint}
\end{equation} 
for $1\le i\le M$. Going back and forth between \eqref{periodicdynamics} and \eqref{periodicconstraint} lets us find the $k$'s as in Figure \ref{FIG2}. In particular, $k_1=e^{-\Delta \mu}$.

\subsection{Correlated mortality: asymptotics}
\begin{proof}[Proof of Lemma \ref{zbarexpansion}]
Because $\Lambda_t$ is bounded, $\lambda_\infty>\lambda_T$, the probability of $\Lambda_t$ reaching this bound is $O(\epsilon^2)$. Therefore 
$$
Z_t=e^{(\mu-\frac12\sigma^2)t+\int_0^t(1-k_s)[\lambda_s+\epsilon (\lambda_s-\eta)W_s])\,ds + \sigma B_t}+O(\epsilon^2).
$$
except on an event of probability $O(\epsilon^2)$. The exponent is normally distributed with mean $(\mu-\frac12\sigma^2)t+\int_0^t(1-k_s)\lambda_s\,ds$, and variance
$$
E[\sigma^2B_t^2+2\sigma\epsilon\int_0^t(\lambda_s-\eta)(1-k_s)B_tW_s\,ds]+O(\epsilon^2)
=\sigma^2t+2\sigma\epsilon\int_0^t(\lambda_s-\eta)(1-k_s)\rho s\,ds+O(\epsilon^2).
$$
Therefore 
\begin{align*}
z_t&=e^{(\mu-\frac12\sigma^2)t+\int_0^t(1-k_s)\lambda_s\,ds}e^{\frac{\sigma^2t}{2}+\epsilon\sigma\rho\int_0^t s(\lambda_s-\eta)(1-k_s) \,ds}+O(\epsilon^2)\\
&=e^{\mu t+\int_0^t(1-k_s)[\lambda_s+\epsilon\sigma\rho s(\lambda_s-\eta)]\,ds}+O(\epsilon^2)\\
&=e^{\mu t+\int_0^t(1-k^0_s)\lambda_s\,ds} e^{\epsilon \int_0^t[(\lambda_s-\eta)\sigma\rho s(1-k^0_s)+\lambda_sk^0_s\sigma\rho\bar z_s]\,ds}+O(\epsilon^2)\\
&=z^0_t(1+\epsilon\sigma\rho \int_0^t[(\lambda_s-\eta) s(1-k^0_s)+\lambda_sk^0_s\bar z_s]\,ds)+O(\epsilon^2).
\end{align*}
Therefore
$$
\bar z_t=\int_0^t[(\lambda_s-\eta) s(1-k^0_s)+\lambda_sk^0_s\bar z_s]\,ds\quad \text{or}\quad \bar z'_t=(\lambda_t-\eta) t(1-k^0_t)+\lambda_tk^0_t\bar z_t
$$
as claimed. 

Turning to utility, we have that except on an event of probability $O(\epsilon^2)$,
\begin{align*}
Z_t^{1-\gamma}&=e^{(1-\gamma)[(\mu-\frac12\sigma^2)t+\int_0^t(1-k_s)[\lambda_s+\epsilon (\lambda_s-\eta)W_s])\,ds + \sigma B_t]}+O(\epsilon^2)\\
&=e^{(1-\gamma)[(\mu-\frac12\sigma^2)t+\int_0^t(1-k_s)\lambda_s)\,ds]}\cdot e^{\epsilon (1-\gamma)\int_0^t(1-k_s)(\lambda_s-\eta)W_s\,ds + \sigma(1-\gamma) B_t}+O(\epsilon^2)
\end{align*}
so
\begin{align*}
E[Z_t^{1-\gamma}]&=e^{(1-\gamma)[(\mu-\frac12\sigma^2)t+\int_0^t(1-k_s)\lambda_s\,ds]}\cdot E[e^{\epsilon (1-\gamma)\int_0^t(1-k_s)(\lambda_s-\eta)W_s\,ds + \sigma(1-\gamma) B_t}]+O(\epsilon^2)\\
&=e^{(1-\gamma)[(\mu-\frac12\sigma^2)t+\int_0^t(1-k_s)\lambda_s\,ds]}\cdot e^{\epsilon \sigma\rho(1-\gamma)^2\int_0^t(1-k_s)(\lambda_s-\eta)s\,ds + \frac12\sigma^2(1-\gamma)^2 t}+O(\epsilon^2)\\
&=e^{-\frac{\gamma(1-\gamma)}{2}\sigma^2t}\cdot e^{(1-\gamma)[\mu t+\int_0^t(1-k_s)\lambda_s\,ds]}\cdot e^{\epsilon \sigma\rho(1-\gamma)^2\int_0^t(1-k_s)(\lambda_s-\eta)s\,ds}+O(\epsilon^2)\\
&=e^{-\frac{\gamma(1-\gamma)}{2}\sigma^2t}\cdot e^{(1-\gamma)[\mu t+\int_0^t(1-k^0_s)\lambda_s\,ds]}\cdot e^{\epsilon\sigma\rho \int_0^t[(1-\gamma)^2(1-k^0_s)(\lambda_s-\eta)s+(1-\gamma)k^0_s\bar z_s\lambda_s]\,ds}+O(\epsilon^2)
\end{align*}
Setting $t=T$, we get
\begin{align*}
\theta^0&=\frac{1}{1-\gamma}e^{-\frac{\gamma(1-\gamma)}{2}\sigma^2T}\cdot e^{(1-\gamma)[\mu T+\int_0^T(1-k^0_s)\lambda_s\,ds]}
=\frac{(z^0_T)^{1-\gamma}}{1-\gamma}e^{-\frac{\gamma(1-\gamma)}{2}\sigma^2T}\\
\bar\theta&=(1-\gamma)\int_0^T[(1-\gamma)(1-k^0_s)(\lambda_s-\eta)s+k^0_s\bar z_s\lambda_s]\,ds
\end{align*}
as required.
\end{proof}
For completeness, we'll also work out the case of logarithmic utility (ie $\gamma=1$). 
\begin{align*}
E[\log Z_T]&=(\mu-\frac12\sigma^2)T+\int_0^T(1-k_s)\lambda_s\,ds +O(\epsilon^2)\\
&=(\mu-\frac12\sigma^2)T+\int_0^T(1-k^0_s)\lambda_s\,ds+\epsilon\rho\sigma\int_0^Tk^0_s\bar z_s\lambda_s\,ds+O(\epsilon^2).
\end{align*}
The analogue of $\bar\theta$ is therefore $\int_0^Tk^0_s\bar z_s\lambda_s\,ds>0$. 
So with logarithmic utility, $\rho>0$ is always favourable (at least asymptotically), as would be expected from our earlier calculations.

\clearpage

\clearpage

\begin{table}
\begin{center}
\begin{tabular}{|c|c|}
\hline
year $t$ & $k_t$ \\
\hline
1	&	0.93147	\\
2	&	0.86589	\\
3	&	0.80327	\\
4	&	0.74360	\\
5	&	0.68686	\\
6	&	0.63300	\\
7	&	0.58198	\\
8	&	0.53372	\\
9	&	0.48819	\\
10	&	0.44527	\\
11	&	0.40492	\\
12	&	0.36704	\\
13	&	0.33155	\\
14	&	0.29838	\\
15	&	0.26744	\\
16	&	0.23866	\\
17	&	0.21196		\\
18	&	0.18727	\\
19	&	0.16451	\\
20	&	0.14363	\\
\hline
\end{tabular}
\end{center}
\bigskip
\caption{recovery schedule $k_t$ in a $T=20$ year Riccati tontine for a pool of $x=65$-year-olds. We assume Gompertz $m=90$ and $b=10$, with a Makeham parameter $\eta=2\%$, under an expected investment return of $\mu=7\%$.}
\label{TABLE1}
\end{table}


\clearpage

\begin{table}
\begin{center}
\begin{tabular}{|c|cc|cc|cc|}
\hline
pool size & \multicolumn{2}{c|}{$\kappa_t=1$, $k_t$ extremal} & \multicolumn{2}{c|}{Riccati $k_t$} & \multicolumn{2}{c|}{$\kappa_t=k_t$, $k_t$ extremal} \\
 $n$ & $k_{20}$ & $z_{20}$ & $k_{20}$ & $z_{20}$ & $k_{20}$ & $z_T{20}$ \\
\hline
2 & 0 & 5.78882 & 0.143629 & 5.33605 & 0.188823 & 5.29598 \\
3 & 0 & 6.48671 & 0.143629 & 6.02782 & 0.166672 & 5.99979 \\
5 & 0 & 6.92345 & 0.143629 & 6.64347 & 0.150730 & 6.63437 \\
10 & 0.117374 & 6.96237 & 0.143629 & 6.93912 & 0.144120 & 6.93868 \\
20 & 0.143352 & 6.96237 & 0.143629 & 6.96224	 & 0.143632 & 6.96224 \\
50 & 0.143629 & 6.96237 & 0.143629 & 6.96237 & 0.143629 & 6.96237\\
$\infty$ & 0.143629 & 6.96238 & 0.143629 & 6.96238 & 0.143629 & 6.96238\\
\hline
\end{tabular}
\end{center}
\bigskip
\caption{Terminal recovery values $k_T$ and average appreciation factor $z_T$ in $T=20$ year tontines for various pools of $x=65$-year-olds. We assume Gompertz $m=90$ and $b=10$, with a Makeham parameter $\eta=2\%$, under an expected investment return of $\mu=7\%$. On the left we show the extremal tontine with $\kappa_t=1$. The Riccati tontine is in the middle. On the right we show the extremal tontine with $\kappa_t=k_t$.}
\label{TABLE2}
\end{table}

\clearpage

\begin{table}
\begin{center}
\begin{tabular}{|c|c|}
\hline
pool size $n$ & std. dev. \\
\hline
2	&	6.215	\\
3      &      7.209        \\
5	&	8.123	\\
10	&	8.332	\\
20	&	8.004	\\
50	&	7.812	\\
100	&	7.758	\\
200	&	7.732	\\
500	&	7.717	\\
1000 &     7.713        \\
$\infty$ &	7.708	\\
\hline
\end{tabular}
\end{center}
\bigskip
\caption{Standard deviation of the payout for a representative investor in a $T=20$ year horizon Riccati tontine for a pool of size $n$ with $x=65$-year-olds. We assume Gompertz $m=90$ and $b=10$, with a Makeham parameter $\eta=2\%$, under an expected investment return of $\mu=7\%$ and volatility $\sigma=20\%$.}
\label{TABLE3}
\end{table}

\clearpage

\begin{figure}
\begin{center}
\includegraphics[width=0.45\textwidth]{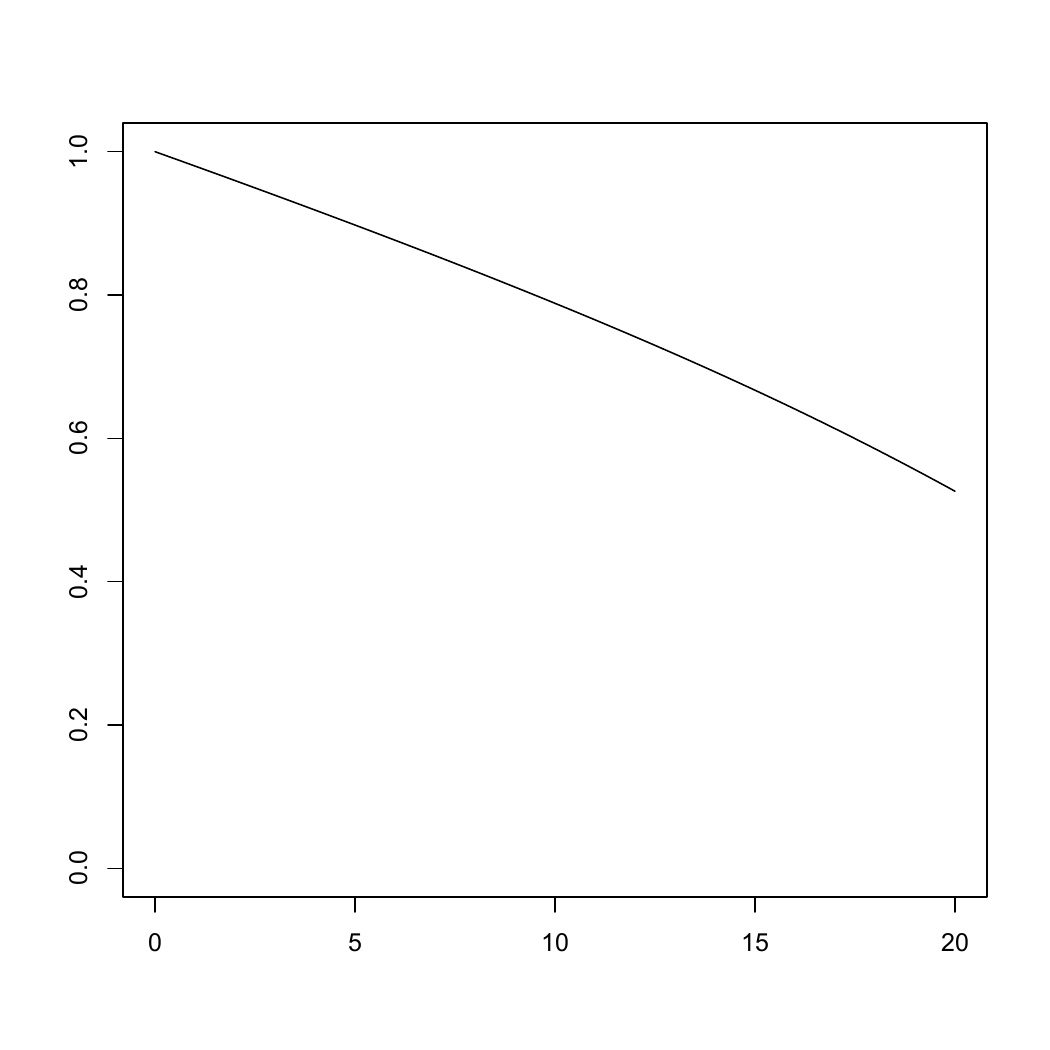} 
\includegraphics[width=0.45\textwidth]{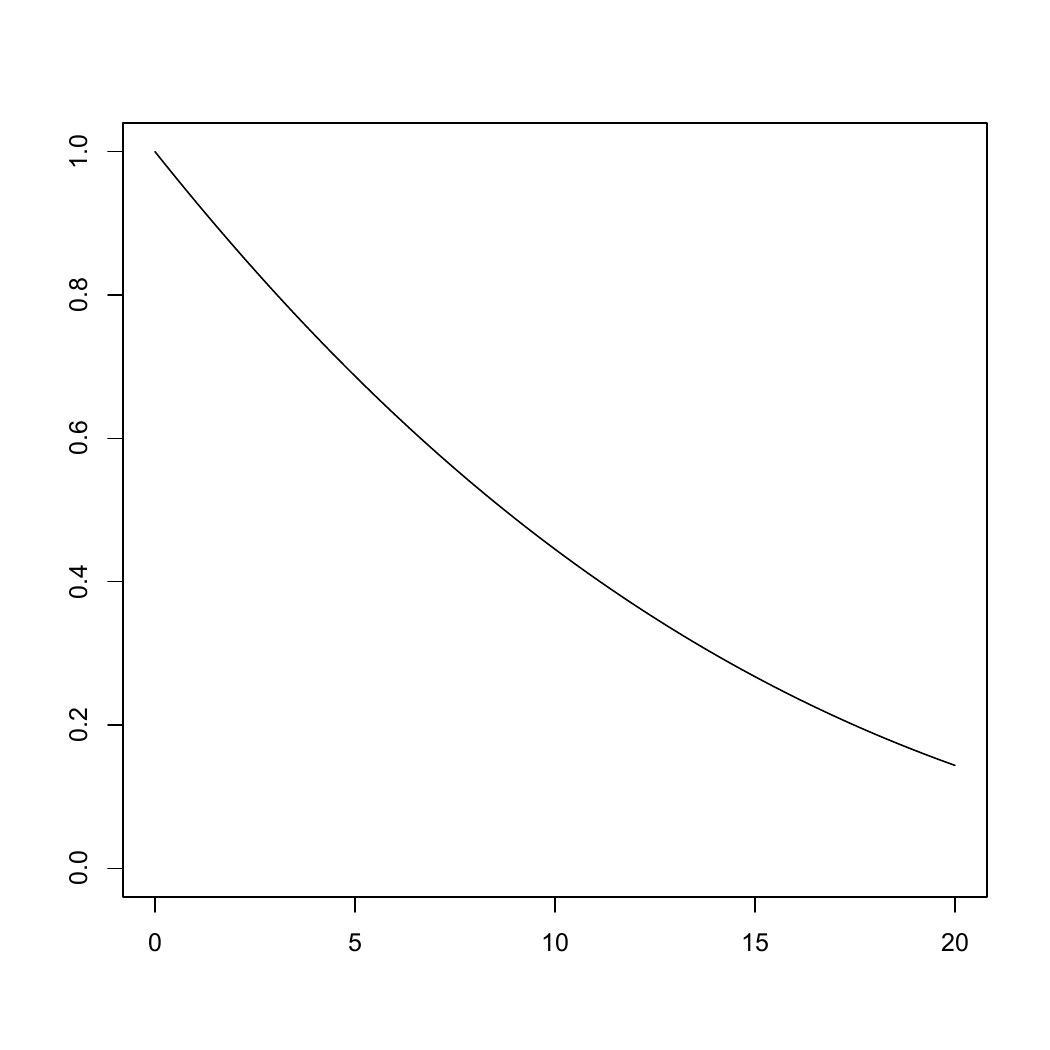} 
\caption{Plot of the optimal recovery schedule $k_t$ in a $T=20$ year Riccati tontine for a pool of $x=65$-year-olds, assuming Gompertz $m=90$ and $b=10$, with a Makeham parameter $\eta=2\%$, under an investment return of $\mu=2\%$ (left, suitable for fixed income investments) and $\mu=7\%$ (right, suitable for a balanced portfolio).}
\label{FIG1}
\end{center}
\end{figure}

\clearpage

\begin{figure}
\begin{center}
\includegraphics[width=0.45\textwidth]{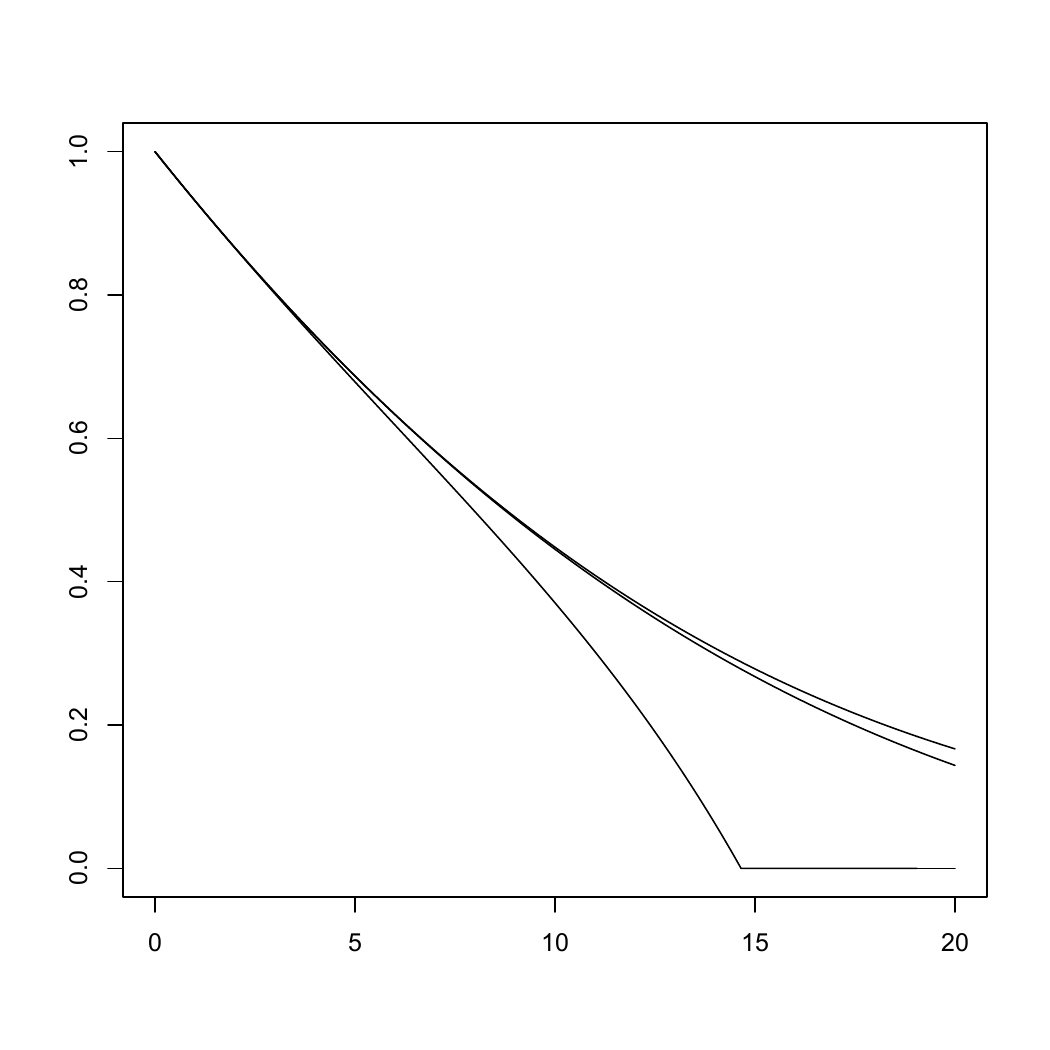} 
\includegraphics[width=0.45\textwidth]{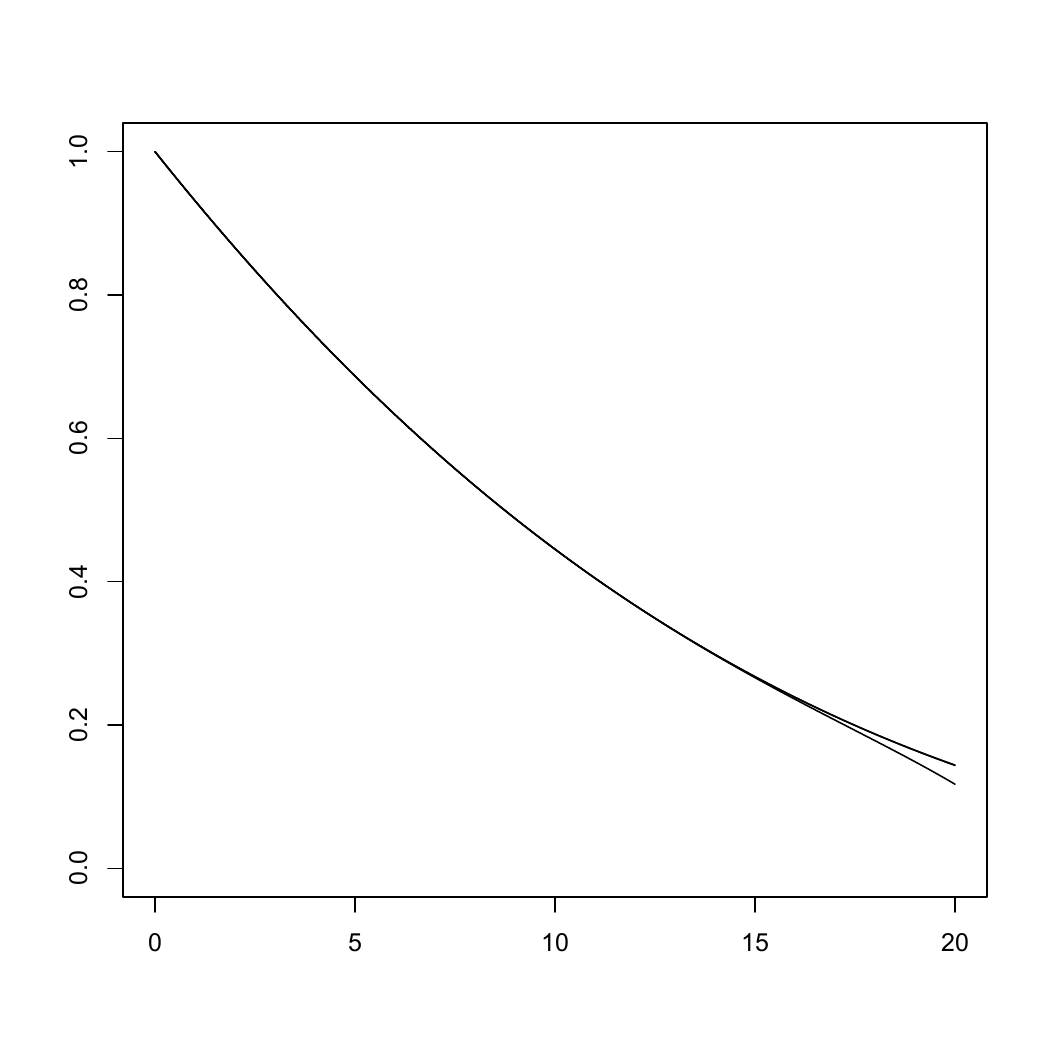} 
\caption{Plot of recovery schedules $k_t$ in $T=20$ year tontines for a pool of $x=65$-year-olds, assuming Gompertz $m=90$ and $b=10$, with a Makeham parameter $\eta=2\%$, under an investment return of $\mu=7\%$. Showing the Riccati tontine and extremal bracketing tontines on either side of it. On the left, the pool size $n=3$; on the right, the pool size $n=10$.}
\label{FIGbracket}
\end{center}
\end{figure}

\clearpage

\begin{figure}
\begin{center}
\includegraphics[width=0.60\textwidth]{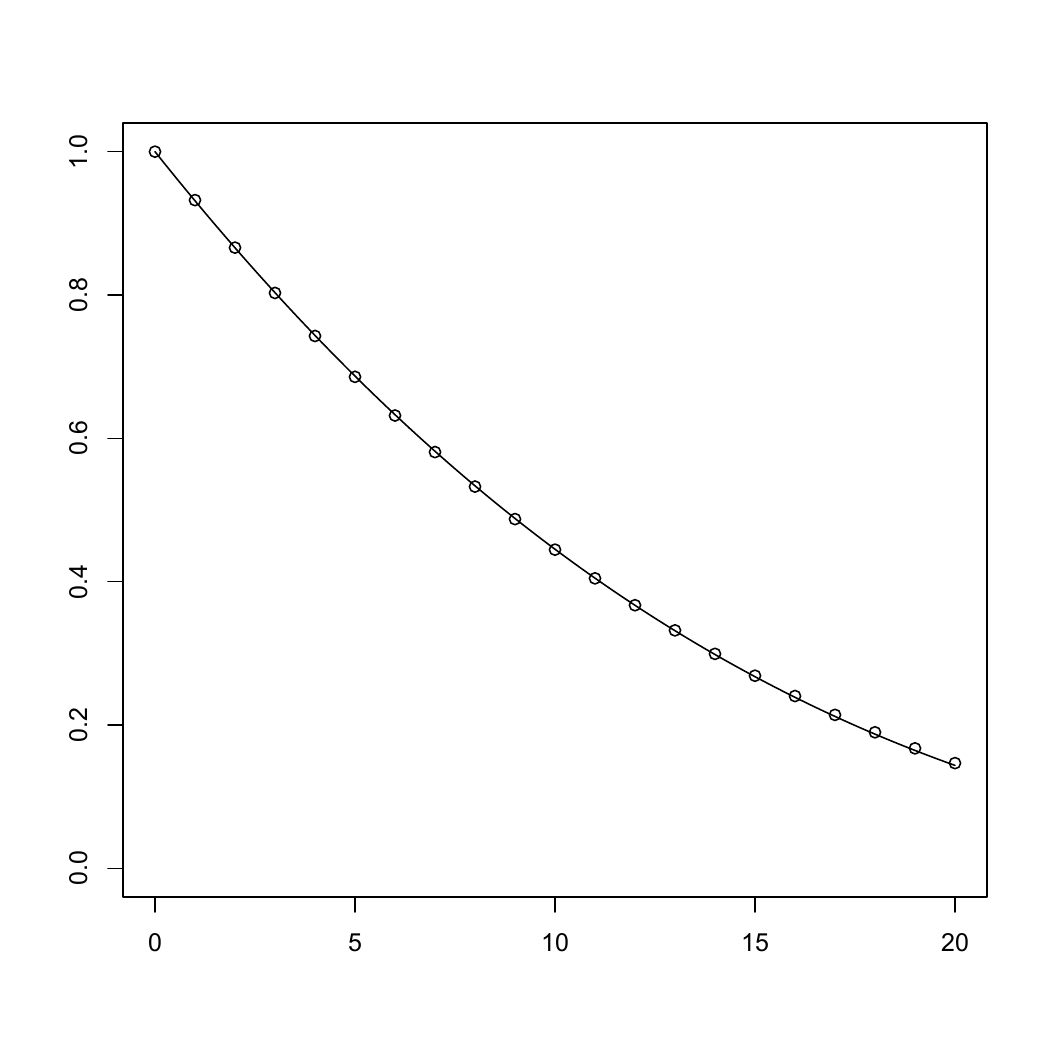} 
\caption{Plot of $k_t$, in a {\bf discrete} (dots) versus {\bf continuous} (curve) model, in a $T=20$ year Riccati tontine, under $x=65$, $m=90$, $b=10$, with $\eta=2\%$, and $\mu=7\%$. The pool size is $n=20$. The frequency at which investors are allowed to leave (and die) doesn't make much of a difference.}
\label{FIG2}
\end{center}
\end{figure}

\end{document}